\journal{Systems \& Control Letters}
\def\qed{\hfill \vrule height 7pt width 7pt depth 0pt}
\def\beq{\begin{equation}}
\def\eeq{\end{equation}}
\newtheorem{theorem}{Theorem}
\newtheorem{proposition}[theorem]{Proposition}
\theoremstyle{remark}
\newcommand{\ds}{\displaystyle}
\newcommand{\ba}{\begin{array}}
\newcommand{\ea}{\end{array}}
\newcommand{\tcr}{}
\newcommand{\tcb}{}
\renewcommand{\l}{\left}\renewcommand{\r}{\right}
\newcommand{\be}{\begin{equation}}
\newcommand{\ee}{\end{equation}}
\newcommand{\eps}{\varepsilon}
\newcommand{\mc}{\mathcal}
\newcommand{\ov}{\overline}
\newcommand{\1}{\mathbbm{1}}
\newcommand{\R}{\mathbb{R}}
\newcommand{\summ}{\sum\limits}
\renewcommand{\P}{\mathbb{P}}
\def\1{\mathds{1}}
\def\R{\mathbb{R}}
\def\P{\mathbb{P}}
\def\diag{{\rm diag}\,}
\begin{document}
\begin{frontmatter}
% "Title of the paper"
\title{From local averaging to emergent global behaviors:\\the fundamental role of network interconnections
\tnotetext[titlenote]{
%\footnote{
This work is dedicated to the memory of Jan C.~Willems.}
%\\ \textbf{(special issue JCW)}
}
%\runtitle{Robustness of large-scale stochastic matrices}

% indicate corresponding author with \corref{}
% \author{\fnms{John} \snm{Smith}\corref{}\ead[label=e1]{smith@foo.com}\thanksref{t1}}
% \thankstext{t1}{Thanks to somebody}
% \address{line 1\\ line 2\\ printead{e1}}
% \affiliation{Some University}

%\begin{aug}
%  \author{\fnms{Giacomo}  \snm{Como}\ead[label=e1]{giacomo.como@control.lth.se} \ead[label=u1,url]{http://www.control.lth.se/Staff/GiacomoComo/}}
%  \and
%  \author{\fnms{Fabio} \snm{Fagnani}\ead[label=e2]{fabio.fagnani@polito.it} \ead[label=u2,url]{http://calvino.polito.it/~fagnani/indexeng.html}}

\author{Giacomo Como\tnotetext[authornote]{The first author is a member of the excellence centers LCCC and ELLIIT. His research has been supported by the Swedish Research Council through  a junior research grant. %Information Dynamics in Large-Scale Networks.
} and Fabio Fagnani}
%\date{}% <-this % stops a space
%\thanks{The first two authors were partially supported
%by the Swedish Research Council through the Junior Research Grant Information Dynamics in Large Scale Networks and the Linnaeus Excellence Center, LCCC.}% <-this % stops a space
%\thanks{G. Como and E. Lovisari are with the Department of Automatic Control, Lund University, SE-221 00 Lund, Sweden {\tt\small {giacomo.como, enrico.lovisari}@control.lth.se}}%
%\thanks{K. Savla is with the Sonny Astani Department of Civil and Environmental Engineering, University of Southern California, Los Angeles, CA 90089-2531 {\tt\small ksavla@usc.edu}}%
\address{G.~Como is with the Department of Automatic Control, Lund University, BOX118, SE-22100 Lund, Sweden {\tt\small giacomo.como@control.lth.se}.\\ F.~Fagnani is with the Lagrange Department of Mathematical Sciences, Politecnico di Torino, Corso Stati Uniti 24,  10129, Torino, Italy, {\tt\small fabio.fagnani@polito.it}.} %

 % \thankstext{t2}{Footnote to the first author with the `thankstext' command.}

%\runauthor{Giacomo Como and Fabio Fagnani}

%  \affiliation{Lund University and Politecnico di Torino}

%  \address{Department of Automatic Control,\\ Lund University,\\ BOX 118, 22100, Lund, Sweden\\
    %      \printead{e1}}

%  \address{Dipartimento di Matematica,\\ Politecnico di Torino,\\ Corso Stati Uniti 24, 10129, Torino, Italy\\
    %      \printead{e2}}

%\end{aug}

%\begin{keyword}%[class=AMS]
%\kwd[Primary ]{60J10}\kwd[; secondary ]{15B51}\kwd{93C73}\kwd{93A15}\kwd{68U35}\kwd{68U35}
%\end{keyword}

%\end{frontmatter}
%\maketitle

\begin{abstract}
Distributed averaging is one of the simplest and most widely studied network dynamics. Its applications range from cooperative inference in sensor networks, to robot formation, to opinion dynamics. %Averaging dynamics can be seen as the result of interconnecting simple linear devices much in the style of the abstract concept of interconnection proposed by Jan C.~Willems in its behavioral approach to systems and control. 
A number of fundamental results and examples scattered through the literature are gathered here and some original approaches and generalizations are presented, emphasizing the deep interplay between the network interconnection structure and the emergent global behavior.

\end{abstract}
\begin{keyword}  Network dynamics\sep interconnections \sep emergent behaviors \sep distributed averaging \sep consensus \sep polarization \sep electrical networks
%% keywords here, in the form: keyword \sep keyword

%% PACS codes here, in the form: \PACS code \sep code

%% MSC codes here, in the form: \MSC code \sep code
%% or \MSC[2008] code \sep code (2000 is the default)

\end{keyword}

\end{frontmatter}
\section{Introduction}

One of the core concepts in the behavioral approach to systems and control developed by Jan Willems in the '80s is that of interconnection \cite{Willems1}. Encompassing the traditional notion of feedback interconnection on which classical input/output control theory is based, the behavioral approach allows for defining interconnections of systems at a more primitive level, as intersections of solution sets of the evolution equations, without the need for specific flow diagrams. As Jan used to repeat, what is an input and what is an output  is a matter of the application. This idea of going beyond the classical input/output formalism proved fruitful in applications, e.g., in coding theory, where Willems' study of minimal state space realizations \cite{Willems2} laid the foundations of trellis representations which are the basic tool for the design of efficient decoding algorithms.

More recently, the study of network dynamics is showing deep cultural analogies with the {\it ansatz} of the behavioral approach. Network dynamics entail a large number of (relatively) simple systems coupled together along the architecture of a graph. The overall dynamical system can thus be seen as the interconnection of these atomic devices. It does not make much sense to classify {\it a priori} interconnection signals as input or outputs, rather they are variables coupling the systems, possibly sensor measurements, state positions, epidemic states, and it is often impossible to say who is influenced by whom. The emergence of global behaviors such as synchronization, information fusion, polarization, and diffusion is one of the distinctive features of these complex interconnected systems. Such global behaviors can in fact be seen as the result of the local interactions and of the interconnection graph structure. % of the graph which determines the architecture of the .

%REFERENCES SHOULD BE ADDED TO THIS PARAGRAPH 
This paper focuses on a particularly simple and well studied class of network dynamics: distributed averaging systems.  \tcr{\cite{johnthes,OlshevskyTsitsiklis:2011}} These are linear network dynamics exhibiting many interesting collective behaviors, such as synchronization and transition phenomena. Their applications range from inferential sensor network algorithms \tcr{\cite{xiao}}, to network vehicle formation \tcr{\cite{ali}}, to models for opinion dynamics \tcr{\cite{degroot}}. Most of the behavioral approach developed by Jan was in fact focused on linear systems: he used to say that linear systems are sufficiently rich from a theoretical viewpoint and yet containing a huge variety of applications. Keeping models as simple as possible was central in Jan's approach to science.

Using classical results from the Perron-Frobenius theory of non-negative matrices, we first present an asymptotic analysis of the linear averaging dynamics on arbitrary interconnection graphs. As expected, the graph topology plays a crucial role in shaping the emergent global behavior. While it is well known that all states reach an asymptotic consensus on connected graphs, Theorem \ref{theo convergence} of Section \ref{sec:asymp} analyzes the case of a general graph and shows that the asymptotic state of every agent in the network turns out to be a convex combination of the consensus reached by the sink connected components (i.e., components  with no outgoing links). The weights of such convex combination have several useful interpretations. They can be seen as hitting probabilities of the dual Markov chain generated by the same averaging matrix or, when the graph is undirected, as voltages of an electrical circuit with suitable boundary conditions on the nodes belonging to the sink components, as explained in Section \ref{sec:electrical}. {While analogous electrical interpretations are well known in Markov chain theory \cite{DoyleSnell:84,LevinPeresWilmer}, they have received relatively minor attention in the distributed averaging literature, with a few exceptions. In particular, to our knowledge, Theorem \ref{theorem-resistances} has not appeared elsewhere in this generality. } 

A relevant case in the applications is when the sink components all consist of single nodes ---called stubborn nodes--- that never change their state, e.g., playing the role of opinion leaders in social networks, or anchor nodes in robot formation control. The final part of the paper is dedicated to a deeper understanding of how the asymptotic state is distributed within the network in the presence of such stubborn nodes. It turns out that ---depending on the stubborn nodes' centrality and the graph connectivity--- quite different phenomena can emerge ranging from polarized to homogeneous equilibrium configurations. \cite{Acemoglu.etal:2013} In the polarized case, nodes tend to cluster in subfamilies and converge to values very close to that of a particular stubborn agent, whereas in the homogeneous regime most of the nodes tend to get close to a consensus on a value which is a convex combination of the stubborn node values. In Section \ref{sec:polarizedvshomo}, we present these phenomena through an example where the transition between the two regimes can be analyzed in detail. We then recall more general results appeared in the literature.

%The simplicity of the averaging dynamics and the neat interplay between the interconnection graph and the emergent behavior lay in the part of science that Jan most appreciated: simple and deep.% at the same time.

%For the reader's convenience, 
We gather here some notational conventions. The transpose of  $A$ is denoted by $A'$; $\1$ is the all-$1$ vector; $\1_{\mc A}$ is the vector with all entries equal to $0$ except for those whose label is in $\mc A$ that are equal to $1$. %; for a vector $x$, $x_{\mc A}:=\1_{\mc A}'x$ and $x_*=\min_ix_i$. % is the minimum entry. 
\tcr{The asymptotic notation $a\ll b$ and $a\sim b$ means $\lim a/b=0$ and $\lim a/b=1$, respectively.} 

\section{Averaging dynamics on general graphs}\label{sec:asymp}
Let $\mc G=(\mc V,\mc E,W)$ be a directed weighted graph representing the network, where $\mc V=\{1,\ldots,n\}$ is the set of nodes, $\mc E\subseteq \mc V\times\mc V$ is the set of links, and $W\in\R^{n\times n}$ is a matrix of nonnegative link weights such that $W_{ij}>0$ if and only if $(i,j)\in\mc E$, with positive diagonal elements of $W$ corresponding to self-loops. We refer to the graph $\mc G$ as: \emph{connected} if $W$ is irreducible;\tcr{\footnote{\tcr{Note that this convention deviates from the one adopted by some authors who refer to $\mc G$ as \emph{strongly} connected if $W$ is irreducible and simply connected if $W+W'$ is irreducible.}}}%  (i.e.., no permutation can transform it into a block diagonal matrix); 
\emph{undirected} if $W$ is symmetric; \emph{balanced} if $W\1=W'\1$; \emph{unweighted} if $W_{ij}\in\{0,1\}$ for all $i,j\in\mc V$.
%; and \emph{simple} if it is unweighted, undirected, and without self loops
We denote the out-degree vector by $w=W\1$ and assume\footnote{This assumption implies no loss of generality since one can add a self-loop with $W_{ii}>0$ to nodes $i$ with $w_i=0$ without modifying connectivity and other properties of $\mc G$.} that $w_i>0$ for all nodes $i$. We then introduce the matrices
%associate to the graph $\mc G$ the matrices 
\be\label{Pdef}D=\diag(w)\,,\quad P=D^{-1}W\,,\quad L=D-W
%\,,\qquad L=D-W
\,.\ee
Observe that the matrices $P$ and $-L$ are respectively row-stochastic and Metzler. Also, $P'w=w$ and $L'\1=0$ if and only if $\mc G$ is balanced. Moreover, $\mc G$ being undirected is equivalent to the detailed balance $w_iP_{ij}=w_jP_{ji}$ for $i,j\in\mc V$, a property that is referred to as reversibility of $P$ (with respect to $w$). The matrix $L$ is known as the graph \emph{Laplacian}. 

%GAME-THEORETIC INTERPRETATION WITH UTILITY FUNCTIONS 
%$$u_i(x_i,x_{-i})=-\sum_jW_{ij}(x_i-x_j)^2$
One of the most popular network dynamics can be seen as the interconnection of local averaging systems, 
%A local averaging system 
i.e., multi-input/single-state dynamics placed at the nodes $i\in\mc V$ and governed by the linear updates 
%\beq\label{atomic system} 
$x_i(t+1)=\alpha x_i(t)+(1-\alpha)\sum_{j}P_{ij}u_j(t)$. %\,.\eeq
Here, %$N_v$ denotes a finite set, $P_{vw}\geq 0$, $\sum_wP_{vw}=1$ averaging weights, 
$\alpha\in[0,1]$ is an inertia parameter. % (that might be interpreted as self-confidence in opinion dynamics applications). 
%Interconnection of the various systems is obtained %through a directed graph $\mc G=(\mc V, \mc E)$ where the set of nodes $\mc V$ coincide with the set of agents and $\mc E\subseteq \mc V\times \mc V$ is the set of links describing the agents' interconnections chosen in such a way that the out degree of node $v$, $d_v=|\{w\,|\, (v,w)\in\mc E\}|$ equals the cardinality of the input set $N_v$. By identifying thus $N_v$ with the neighbors of $v$ in $\mc G$ we can operate the interconnection 
By putting  $u_j(t)=x_j(t)$  one obtains the interconnected system
\beq\label{interconnected system} x_i(t+1)=\alpha x_i(t)+(1-\alpha)\sum_{j}P_{ij}x_j(t)\,,\eeq
for $i\in\mc V$.
%for $i\in\mc V$.
In \eqref{interconnected system}, 
the sum index $j$ runs in principle over the whole node set $\mc V$, but is in fact restricted to the out-neighborhood $\mc N_i:=\{j:\,W_{ij}>0\}$ of node $i$ in $\mc G$. 
By assembling all the node states in a column vector $x(t)\in\R^n$, (\ref{interconnected system}) can be compactly rewritten as %in the compact form
\beq\label{compact} x(t+1)=P_{\alpha} x(t)\,,\eeq
where $P_{\alpha}=\alpha I+(1-\alpha)P$. 
Hence, the state vector $x(t)$ of the \emph{distributed averaging} dynamics \eqref{compact} evolves as $x(t)=P_{\alpha}^tx(0)$, so that its asymptotic behavior is dictated by the eigen-structure of $P_{\alpha}$. 
Being a stochastic matrix, $P$ is non-expansive in the $||\cdot||_\infty$ norm, so that its spectrum is contained in the unitary disk centered in $0$. Hence, for $0\le\alpha\le1$, the matrix $P_{\alpha}$ has $1$ as eigenvalue (corresponding to right eigenvector $\1$) and its whole spectrum is contained in the closed disk of diameter coinciding with the segment joining the points $-1+2\alpha$ and $1$ in the complex plane. 
%Since $P_{\alpha}$ is stochastic, it admits $1$ as eigenvalue and has spectrum contained in the unitary disk centered in $0$. 
%Such linear dynamical systems are in principle among the simplest to be studied and its asymptotic behavior is simply coded into the spectrum of the matrix. 
%What is remarkable is that 
Finer properties of the spectrum of $P_\alpha$ are closely related to the geometrical properties of the graph $\mc G$ as summarized below. 

First we consider the case when the graph $\mc G$ is connected. In this case, it is a standard result of the Perron-Frobenius theory that $P_{\alpha}^t$ converges to a matrix $\1\pi'$ where $\pi$ can be uniquely characterized as the left eigenvector $\pi'=\pi'P$ such that $\1'\pi=1$. Connectivity of $\mc G$ implies that all the entries of $\pi$ ---which is referred to as the \emph{centrality} vector--- are strictly positive. For a balanced graph, $\pi$ is proportional to the degree vector, namely, $\pi=w/(\1'w)$. For general, unbalanced, connected graphs such simple expression does not hold true, while one can express the entries $\pi_i$ in terms of infinite sums.
For $\alpha\in[0,1)$, let the \emph{mixing time} of $P_{\alpha}$ be 
$$\tau_{\alpha}:=\inf\Big\{t\ge0:\,\max_{i\in\mc V}\sum\nolimits_{j}|(P_{\alpha}^t)_{ij}-\pi_j|\le \frac1{2e}\Big\}\,.$$
The mixing time is a popular index to study the speed of convergence of $P_{\alpha}^t$. In certain cases it can be estimated from knowledge of the second largest eigenvalue of $P_{\alpha}$ or coupling techniques. E.g., for the unweighted $d$-dimensional toroidal grid, one has $\tau_{\alpha}\sim C_dn^{2/d}$ where $C_d$ is a constant depending on the dimension $d$ but not on the graph size $n$. For general large-scale graphs whose spectrum analysis is unfeasible and for which no effective couplings are known, it proves more convenient to relate the mixing time to the graph \emph{conductance}
$$\Phi:=\min_{\emptyset\ne\mc U\subsetneq\mc V}\frac{\sum_{i\in\mc U}\sum_{j\in\mc V\setminus\mc U}\pi_iP_{ij}}{\sum_{i\in\mc U}\pi_i\cdot\sum_{j\in\mc V\setminus\mc U}\pi_j}\,,$$
that is a measure of the lack of bottlenecks in the graph. Results in \cite[Section 4.3]{MontenegroTetali} imply that 
\be\label{conductance-bound}\frac{1-2/e}{\Phi}\le \tau_{1/2}\le\frac1{\Phi^2}\log\frac{e^2}{\pi_*}\,,\ee
where $\pi_*=\min_{i\in\mc V}\pi_i$. By combining the bounds above with estimates of the conductance, it can be shown, e.g., that the Erdos-Renyi random graphs in the connected regime\footnote{They are constructed by considering $n$ nodes randomly putting a link between any pair of them independently with probability $p=c\log n/n$ for $c>1$.}  exhibit, with probability $1$, mixing times of the order of $\log n$. Such graphs are thus mixing faster then the $d$-dimensional tori.

%\begin{remark}\label{remark:pi}
%For a connected balanced graph, one has $\pi=w/(\1'w)$. For general, not necessarily balanced, connected graphs such simple expression does not hold true, however one can express the centrality in terms of infinite sums as 
%$$\frac1{\pi_i}=1+\sum_{k,j\in\mc V\setminus\{i\}}P_{ik}\sum_{t\ge0}Q^t_{kj}\,,$$
%where $Q$ is the substochastic matrix obtained from $P$ by removing its $i$-th row and column.\end{remark}

%\begin{remark}
%The mixing time can be related to geometric properties of the graph $\mc G$. E.g., for all $0<\alpha<1$, \cite[Corollary 2.15]{MontenegroTetali} implies that 
%$$\tau_{\alpha}\le\left\lceil\frac{1}{2\alpha(1-\alpha)\lambda}\log\frac{1}{e^2\pi_*}\right\rceil\,,$$
%where
%$$\pi_*=\min_{i\in\mc V}\pi_i\,,\qquad \lambda:=\inf_{\substack{y\in\R^n\setminus\{0\}\\\1'y=0}}\frac{\sum_{i\in\mc V}\sum_{j\in\mc V}\pi_iP_{ij}(y_i-y_j)^2}{\sum_{i\in\mc V}\pi_iy_i^2}\,,$$
%while results in \cite[Section 4.3]{MontenegroTetali} imply that 
%$$\frac{1-2/e}{\Phi}\tau_{1/2}\le\frac1{\Phi^2}\log\frac{e^2}{\pi_*}\,,$$
%where 
%$$\Phi:=\min_{\emptyset\ne\mc U\subsetneq\mc V}\frac{\sum_{i\in\mc U}\sum_{j\in\mc V\setminus\mc U}\pi_iP_{ij}}{\sum_{i\in\mc U}\pi_i\cdot\sum_{j\in\mc V\setminus\mc U}\pi_j}$$
%is the graph conductance. \end{remark}
%
%\begin{remark}\cite{Como.Fagnani:2015}
%
%\end{remark}

These results imply that the state $x(t)$ of the averaging dynamics (\ref{compact}) on a connected graph converges to a consensus on a value $\ov x=\pi'x(0)$ that is the average of the nodes' initial values  weighted by their centralities. The speed of this convergence is captured by the mixing time as in the following.  

\begin{proposition}\label{prop convergence} Let $\mc G=(\mc V,\mc E,W)$ be a connected graph. Then, 
for every $\alpha\in(0,1)$, the distributed averaging dynamics \eqref{compact} satisfy 
\be
||x(t)-\1\ov x||_{\infty}\le||x(0)-\1\ov x||_{\infty}\exp(-\lfloor t/\tau_{\alpha}\rfloor)\,,\ee
for all $x(0)\in\R^n$, where
\be\ov x=\pi'x(0)\,,\ee
so that, in particular, 
$\lim\limits_{t\to +\infty}x(t)=\1\ov x\,.$
%\begin{description}
%\item[(i)] there exists a unique probability vector $\pi$ such that $\pi_i>0$ for all $i\in\mc V$ and $\pi^*=\pi^*P_{\alpha}$ for all $\alpha\in[0,1]$; 
%\item[(ii)] 
%is finite; 
%\item[(iii)] 
%\end{description}
\end{proposition}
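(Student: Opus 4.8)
The plan is to reduce the statement to two elementary properties of the matrix $P_{\alpha}$: it is row-stochastic, hence non-expansive in $\|\cdot\|_{\infty}$, and --- by the Perron--Frobenius discussion preceding the statement --- for $\alpha\in(0,1)$ on a connected graph its powers converge to $\1\pi'$, so that the mixing time $\tau_{\alpha}$ is finite and $\tau_{\alpha}\ge1$ (the case $n=1$ being trivial, since then $x(t)\equiv x(0)=\1\bar x$). From $\pi'P=\pi'$ and $\pi'\1=1$ one gets $\pi'P_{\alpha}=\alpha\pi'+(1-\alpha)\pi'P=\pi'$, so that $\bar x=\pi'x(0)=\pi'x(t)$ stays invariant along the trajectory, while $P_{\alpha}\1=\1$ gives $(P_{\alpha}^{t}-\1\pi')\1=0$. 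I would then record the algebraic identity
$$(P_{\alpha}^{s}-\1\pi')(P_{\alpha}^{t}-\1\pi')=P_{\alpha}^{s+t}-\1\pi'\,,$$
which follows at once from $P_{\alpha}^{s}\1\pi'=\1\pi'$, $\1\pi'P_{\alpha}^{t}=\1\pi'$ and $\1\pi'\1\pi'=\1\pi'$; iterating it yields $(P_{\alpha}^{\tau_{\alpha}}-\1\pi')^{k}=P_{\alpha}^{k\tau_{\alpha}}-\1\pi'$ for every $k\in\N$.

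Next I would quantify one ``mixing block''. Since the operator norm induced by $\|\cdot\|_{\infty}$ is the maximum absolute row sum, the definition of $\tau_{\alpha}$ reads precisely $\|P_{\alpha}^{\tau_{\alpha}}-\1\pi'\|_{\infty\to\infty}=\max_{i}\sum_{j}|(P_{\alpha}^{\tau_{\alpha}})_{ij}-\pi_{j}|\le\tfrac{1}{2e}$. Combining this with the identity above and submultiplicativity of the operator norm gives $\|P_{\alpha}^{k\tau_{\alpha}}-\1\pi'\|_{\infty\to\infty}\le(2e)^{-k}\le e^{-k}$ for all $k\in\N$.

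To conclude, fix $t$ and write $t=q\tau_{\alpha}+r$ with $q=\lfloor t/\tau_{\alpha}\rfloor$ and $0\le r<\tau_{\alpha}$, splitting the evolution into a contracting part of length $q\tau_{\alpha}$ followed by a harmless remainder of $r$ steps. Using $(P_{\alpha}^{q\tau_{\alpha}}-\1\pi')\1=0$ and $\pi'x(0)=\bar x$,
$$x(q\tau_{\alpha})-\1\bar x=(P_{\alpha}^{q\tau_{\alpha}}-\1\pi')\,x(0)=(P_{\alpha}^{q\tau_{\alpha}}-\1\pi')(x(0)-\1\bar x)\,,$$
so $\|x(q\tau_{\alpha})-\1\bar x\|_{\infty}\le e^{-q}\|x(0)-\1\bar x\|_{\infty}$ by the previous paragraph. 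Finally $x(t)=P_{\alpha}^{r}x(q\tau_{\alpha})$ and $\1\bar x=P_{\alpha}^{r}\1\bar x$, hence $x(t)-\1\bar x=P_{\alpha}^{r}(x(q\tau_{\alpha})-\1\bar x)$, and the non-expansiveness of the row-stochastic matrix $P_{\alpha}^{r}$ gives the asserted bound; the limit $x(t)\to\1\bar x$ then follows because $\lfloor t/\tau_{\alpha}\rfloor\to\infty$.

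I do not expect any single computation to be the real difficulty here; the delicate point is the bookkeeping that keeps the multiplicative constant equal to $1$. A naive attempt to extract a contraction at every time step, or over the leftover $r$ steps, would cost a factor comparable to $\bigl(\max_{i}x_{i}(0)-\min_{i}x_{i}(0)\bigr)/\|x(0)-\1\bar x\|_{\infty}$, which may be as large as $2$. The remedy is to iterate the contraction only over complete mixing blocks of length $\tau_{\alpha}$ and to absorb the remaining $r<\tau_{\alpha}$ steps into the plain non-expansiveness of $P_{\alpha}^{r}$, exploiting the invariance $\pi'x(t)=\bar x$, which is exactly what lets $P_{\alpha}^{t}-\1\pi'$ carry $x(0)-\1\bar x$ onto $x(t)-\1\bar x$.
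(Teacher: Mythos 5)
Your proof is correct and follows essentially the same route as the paper's: both reduce the claim to the bound $\max_i\sum_j|(P_\alpha^t)_{ij}-\pi_j|\le\exp(-\lfloor t/\tau_\alpha\rfloor)$ and then apply $P_\alpha^t-\1\pi'$ to $x(0)-\1\bar x$. The only difference is that the paper cites this decay estimate from Levin--Peres--Wilmer, whereas you derive it yourself via the identity $(P_\alpha^{s}-\1\pi')(P_\alpha^{t}-\1\pi')=P_\alpha^{s+t}-\1\pi'$ and submultiplicativity of the induced $\infty$-norm, which is a clean, self-contained substitute.
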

\tcr{\begin{proof}
As a consequence of the non-expansivity of $P'$ in $l_1$-distance we get that 
$$\sum\nolimits_{j}|(P_{\alpha})^t_{ij}-\pi_j|\le\exp(-\lfloor t/\tau_{\alpha}\rfloor)\,,\qquad i\in\mc V\,.$$
(Cf., e.g., \cite[Eqn.~(4.34)]{LevinPeresWilmer}.) Then, for all $i\in\mc V$,  
$$\ba{rcl}|x_{i}(t)-\ov x|&=&|\sum_j((P_{\alpha}^t)_{ij}-\pi_j)(x_j(0)-\ov x)|\\
&\le&|\sum_j(P_{\alpha}^t)_{ij}-\pi_j|\cdot|x_j(0)-\ov x|\\ 
&\le&\exp(-\lfloor t/\tau_{\alpha}\rfloor)\cdot||x(0)-\ov x||_{\infty}\,,\ea$$
which gives the claim. \qed\end{proof}
}

%So, in the case of a strongly connected interconnection graph, (\ref{compact}) yields that all nodes state converge to the same value: consensus emerges in the population. The consensus point is a convex combination of the initial state with vector of weights given by the vector $\pi$. It is immediate to check that $\pi$ is the unique probability vector yielding $\pi^*P=\pi^*$.

%As shown by Proposition \ref{prop convergence} two quantities capture the asymptotic and transient behavior of the distributed averaging dynamics \eqref{compact}: the invariant probability vector $\pi$ and the mixing time $\tau_{\alpha}$. Both these quantities capture geometric properties of the underlying graph $\mc G$. In particular, if the graph is balanced, then it is easily verified that the invariant potability vector $\pi$  is proportional to the degree vector $w$, specifically $\pi=w/(\1'w)$. 
%
%Also, the estimates of the mixing time in terms of the network conductance come here. 

%which describes the interconnection pattern.
% and which is connected to the $0$ pattern of the matrix as, 
% for any $v\neq w$, we have that $(v,w)\in \mc E$ iff $(P_{\alpha})_{vw} >0$. $\mc G$ is said to be the graph associated with $P_{\alpha}$. Notice, finally, that $P_{\alpha}$ is, by construction, a stochastic matrix (e.g. with all elements non-negative and row sums equal to $1$). 
%Our aim is to study the asymptotic behavior of the dynamical system (\ref{compact}). We first introduce some graph-theoretical notions  and recall a classical facts on stochastic matrix.
We now move on to discussing the asymptotic behavior of the distributed averaging dynamics \eqref{compact} in arbitrary ---not necessarily connected--- graphs. 
%We start by recalling some basic graph-theoretic notions. A length-$l$ path from node $i$ to node $j$ is an $l$-tuple of nodes $\omega =(i_0,\dots , i_l)$ such that $i_0=i$, $i_l=l$, and $(i_{k-1}, i_{k})\in\mc E$ for every $k=1,\dots ,l$. 
The node set $\mc V$ can always be uniquely partitioned as $\mc V=\mc V_1\cup\cdots \cup \mc V_{c}$ where, for $1\le k\le c$, the subgraph $\mc G_k=(\mc V_k, \mc E\cap (\mc V_k\times\mc V_k), W_{|\mc V_k\times\mc V_k})$ is connected 
%(i.e.,  such that for any pair of distinct nodes in $\mc V_k$ there is a path in $\mc G_k$ connecting them) 
and maximal with respect to this property. Such subgraphs are called \emph{connected components}. For any two of them, $\mc G_h$ and $\mc G_k$, we write $\mc G_h\ge\mc G_k$ if there is a path in $\mc G$ connecting some node in $\mc V_k$ to some node in $\mc V_h$. By construction the relation is transitive and such that $\mc G_h\ge\mc G_k$ and $\mc G_k\ge\mc G_h$ if and only if $\mc G_k=\mc G_h$. Connected components which are maximal with respect to the partial ordering $\ge$ are called \emph{sink components}: any path starting in a sink component will never leave it.
The following result characterizes the asymptotic behavior of the averaging dynamics \eqref{compact} on a general graph. 

\begin{theorem}\label{theo convergence} Let $\mc G=(\mc V,\mc E,W)$ be a graph with sink components $\mc G_k=(\mc S_k, \mc E\cap (\mc S_k\times\mc S_k), W_{|\mc S_k\times\mc S_k})$,  for $k=1,\dots , s$. Let $\mc S:=\bigcup_{1\le k\le s}\mc S_k$ and $\mc S_{-k}:=\mc S\setminus\mc S_k$. Then:
\begin{enumerate}
\item[(i)] There exists a unique nonnegative $H\in\R^{n\times s}$ such that $H\1=\1$ and, for $i\in\mc S$, $1\le k\le s$, 
\be\label{H-characterization}LH=0\,,\qquad H_{ik}=\left\{\ba{ll}1&i\in\mc S_k\\0& i\in\mc S_{-k}\,;\ea\r.
\ee
%\qquad i\in\mc S\,,\quad 1\le k\le s\,,\ee 
%$H\1=\1$ and $H_{ij}>0$ if and only if there is a path from node $i$ to some node $k\in\mc V_j$, 
%and probability vectors $\pi^{(k)}\in\R^{\mc S_k}$, for $k=1,\dots,s$, 
\item[(ii)] For every $\alpha\in(0,1)$, the distributed averaging dynamics \eqref{compact} satisfies
\be\label{limx}\lim\limits_{t\to +\infty}x(t)=H\ov x\,,\ee
for all $x(0)\in\R^n$, where $\ov x\in\R^s$ has entries
\be\label{ovxj} \ov x_k=\sum_{i\in\mc S_k}\pi^{(k)}_ix_i(0)\,,\qquad k=1,\dots,s\,,\ee
and $\pi^{(k)}\in\R^{\mc S_k}$ is the centrality vector of $\mc G_k$.
\end{enumerate}
\end{theorem}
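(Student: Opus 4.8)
\emph{Proof idea.} The plan is to exploit the block-triangular structure that the sink components impose on $P$. After relabelling the nodes so that the transient ones $\mc V\setminus\mc S$ come first and the sink ones are listed afterwards, grouped by component, the fact that no link leaves a sink component gives $P$, and hence $P_\alpha=\alpha I+(1-\alpha)P$, the form
\[
P_\alpha=\begin{pmatrix}Q_\alpha & (1-\alpha)R\\ 0 & \tilde P_\alpha\end{pmatrix},
\]
where $\tilde P_\alpha=\alpha I+(1-\alpha)\tilde P$ is block-diagonal with blocks $\alpha I+(1-\alpha)P^{(k)}$, $P^{(k)}$ being the (irreducible, row-stochastic) transition matrix of $\mc G_k$, and where $Q=D_T^{-1}W_{TT}$, $R=D_T^{-1}W_{TS}$ in the matching block notation for $W$ and $D$. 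The elementary fact that drives everything is $\rho(Q)<1$: from every transient node there is a directed path to $\mc S$ (from its component one reaches a $\ge$-maximal, hence sink, component), so $Q^{n}\1<\1$ entrywise and $\|Q^n\|_\infty<1$. Therefore $I-Q$ is invertible with nonnegative inverse $\sum_{j\ge0}Q^j$, and for every $\alpha\in[0,1)$ also $\rho(Q_\alpha)<1$, so $Q_\alpha^t\to0$ and $I-Q_\alpha=(1-\alpha)(I-Q)$ is invertible.

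For part (i), I would first record that the prescribed sink block of $H$ is the matrix $M\in\R^{\mc S\times s}$ with $M_{ik}=\1_{\{i\in\mc S_k\}}$, and that the rows of $L$ indexed by $\mc S_k$ only involve columns in $\mc S_k$ and coincide with those of the Laplacian of $\mc G_k$; since the latter annihilates $\1$, the identity $LH=0$ holds automatically on the rows indexed by $\mc S$. Writing $H=\binom{H_T}{M}$, the remaining (transient) rows of $LH=0$ amount to $(I-Q)H_T=RM$, which has the unique solution $H_T=(I-Q)^{-1}RM\ge0$; and $H\1=\1$ follows from $M\1=\1$ together with $(I-Q)^{-1}R\1=(I-Q)^{-1}(I-Q)\1=\1$ (using that $Q\1+R\1=\1$). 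This gives existence and uniqueness.

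For part (ii), I would split $x(t)$ into its transient part $y_T(t)$ and its sink part $y_S(t)$, so $y_T(0),y_S(0)$ are the corresponding parts of $x(0)$. On the sink part, $y_S(t)=\tilde P_\alpha^t\,y_S(0)$, and applying Proposition~\ref{prop convergence} to each connected sink component $\mc G_k$ gives $(\alpha I+(1-\alpha)P^{(k)})^t\to\1(\pi^{(k)})'$, whence $y_S(t)\to M\ov x$ with $\ov x$ as in \eqref{ovxj}. On the transient part, iterating $y_T(t+1)=Q_\alpha y_T(t)+(1-\alpha)R\,y_S(t)$ gives
\[
y_T(t)=Q_\alpha^t y_T(0)+(1-\alpha)\sum_{j=0}^{t-1}Q_\alpha^{\,j}R\,y_S(t-1-j);
\]
the first term vanishes, and since $y_S(\cdot)$ is bounded and converges to $M\ov x$ while $\sum_{j\ge0}\|Q_\alpha^{\,j}\|<\infty$ (indeed geometrically), a standard argument yields $y_T(t)\to(1-\alpha)(I-Q_\alpha)^{-1}RM\ov x=(I-Q)^{-1}RM\ov x=H_T\ov x$. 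Hence $x(t)\to H\ov x$.

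The step I expect to be the main obstacle is the transient block: showing that the discounted sum $\sum_{j=0}^{t-1}Q_\alpha^{\,j}R\,y_S(t-1-j)$ converges — which is exactly where the contraction $\rho(Q_\alpha)<1$ is used — and then recognizing its limit $(I-Q)^{-1}RM$ as precisely the harmonic extension $H_T$ produced in part (i). Once this is in place, everything else reduces to Perron--Frobenius on the connected sink components (via Proposition~\ref{prop convergence}) and to bookkeeping with the block decomposition.
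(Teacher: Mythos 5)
Your proof is correct and follows essentially the same route as the paper: the same block-triangular decomposition of $P_\alpha$ with sink components last, Proposition~\ref{prop convergence} applied to each irreducible sink block, the contraction $\rho(Q)<1$ for the transient block (you get it from row sums of $Q^n$, the paper from a left Perron eigenvector—both standard), and the identification of the limit of the transient states with the harmonic extension $H_T=(I-Q)^{-1}RM$. No gaps; only the bookkeeping differs cosmetically.
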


\begin{proof}
Upon reordering nodes in such a way that sink components come last, $P_{\alpha}$ takes the form
$$P_{\alpha}=\left(\begin{matrix} Q&R^{(1)}&\dots &\ldots&R^{(s)}\\
0&P^{(1)}&0&\ldots&0\\
0&0&P^{(2)}&&0\\
\vdots &\vdots&\vdots&\ddots&\vdots \\
0&0&0&\ldots&P^{(s)}\end{matrix}\right)_{\ds \,.}$$
Splitting $x(t) =(y(t),x^{(1)}(t),\dots , x^{(s)}(t))$ accordingly, with $y(t)\in\R^{\mc R}$ where $\mc R:=\mc V\setminus\mc S$, and  $x^{(k)}(t)\in \R^{\mc S_k}$ for $k=1,\dots ,s$, recursion \eqref{compact} reads
\beq\label{splitting}\begin{array}{rcl}y(t+1)&=&\ds Qy(t)+\sum_{1\le k\le s}R^{(k)}x^{(k)}(t)\\[15pt] x^{(k)}(t+1)&=&P^{(k)}x^{(k)}(t)\,,\quad1\le k\le s\,.\end{array}\eeq
First, notice that the evolution of the state on the nodes of the sink components can be studied using Proposition \ref{prop convergence}. In particular, for $k=1,\dots ,s$, we have that $x^{(k)}(t)\stackrel{t\to\infty}{\longrightarrow}\1\ov x_k$, where $\ov x_k$ is as in \eqref{ovxj} and $\pi^{(k)}\in\R^{\mc S_k}$ is the centrality vector of $\mc G_k$. 

On the other hand, $Q$ is nonnegative, %, i.e., it is nonnegative and such that $\sum_{j\in\mc R}Q_{ij}\le1$ for all $i\in\mc R$. 
and the Perron-Frobenius theory implies that its spectral radius $\rho$ is an eigenvalue with associated nonnegative {left} eigenvector $z\in\R^{\mc R}$. Let $\mc J\subseteq\mc R$ be the support of $z$. Since every node in $\mc R$ is connected to some node in $\mc S$, $\min_{j\in\mc J}\sum_{i\in\mc J}Q_{ji}<1$ (otherwise there would be no links from $\mc J$ to $\mc V\setminus\mc J$). Hence,  $$\rho\sum_{i\in\mc J}z_i=\sum_{i\in\mc J}\sum_{j\in\mc R}Q_{ji}z_j=\sum_{i\in\mc J}\sum_{j\in\mc J}Q_{ji}z_j<\sum_{j\in\mc J}z_j\,,$$ i.e., $\rho<1$. 
%in the graph there is a path to some node in $\mc S$, we can find $t_0\ge1$ such that 
%$$\sum\limits_{w\in\mc W}Q^{t_0}_{vw}<1,\;\forall v\in\mc W$$
%If we put $\rho =\max \sum\limits_{w\in\mc W}Q^{t_0}_{vw}<1$, we immediately obtain that $||Q^{t_0}x||_{\infty}\leq \rho||x||_{\infty}$. 
\tcr{Thus, $Q$ is stable and $I-Q$ is invertible. 
%an asymptotically stable matrix (i.e., $Q^t\stackrel{t\to\infty}{\longrightarrow}0$). 
On the other hand, $\lim_{t\to\infty}x^{(k)}(t)=\1\ov x_k$ for $1\le k\le s$ by Proposition \ref{prop convergence}. 
It follows that the first line of \eqref{splitting} is a stable LTI system with converging input, so that its state is necessarily converging to
%classical results on linear systems imply that 
%$$y(t)=Q^ty(0)+\sum_{1\le l\le t}Q^{l-1}\sum_{1\le k\le s}R^{(k)}x^{(k)}(t-l)$$
%\stackrel{t\to\infty}{\longrightarrow}(I-Q)^{-1}\sum_{1\le k\le s}R^{(k)}\1\ov x_k\,.$$
%converges to 
$$\lim_{t\to+\infty}y(t)=(I-Q)^{-1}\sum_{1\le k\le s}R^{(k)}\1\ov x_k\,.$$}
%%$\bar y\in\R^{\mc R}$ characterized by the fixed point relation $\bar y=Q\bar y+\sum_jR_j\1\pi_j^*x^j(0)$ or also
%$\bar y=$. 
This yields \eqref{limx}, with $H\in\R^{n\times s}$ defined by
\beq\label{H}H_{ik}=\left\{  \begin{array}{ll}((I-Q)^{-1}R^{(k)}\1)_i\quad &\, i\in\mc R\\ 1\quad &\, i\in\mc S_k\\ 0\quad &\, i\in\mc S_{-k}\,.\end{array}\right.\eeq
Note that $H$ is nonnegative since both $R^{(k)}$, for $1\le k\le s$, and $(I-Q)^{-1}=\sum_{l\ge0}Q^l$ are.
Moreover, $P_{\alpha}\1=\1$ implies $\sum_{1\le k\le s}R^{(k)}\1=(I-Q)\1$, so that
$$\sum_{1\le k\le s}H_{ik}=((I-Q)^{-1}\sum_{1\le k\le s}R^{(k)}\1)_i=1\,,\quad i\in\mc R\,.$$
Hence, $H\1=\1$. Furthermore, one has that 
$$\ba{rcl}\!\!\!LH\!\!&\!\!\!=\!\!\!&
\ds D(I-\tcb{P})H=\frac1{1-\alpha}D(I-P_{\alpha})\\[5pt]
&\!\!\!=\!\!\!&\ds\frac1{1-\alpha}D(I-Q)(I-Q)^{-1}\!\!
\summ_{1\le k\le s}R^{(k)}\1\1_{\{k\}}'\\[5pt]&&
\ds-\frac1{1-\alpha}D\summ_{1\le k\le s}R^{(k)}\1\1_{\{k\}}'=0\,,\ea$$
proving \eqref{H-characterization}. Uniqueness of the solution of \eqref{H-characterization} follows from invertibility of the $\mc R\times\mc R$ block of $L$.
\qed\end{proof}%\medskip

Theorem \ref{theo convergence} states that the nodes belonging to a sink component $\mc S_k$ asymptotically reach consensus on the value  $\ov x_k=\sum_{i\in\mc S_k}\pi^{(k)}_ix_i(0)$. The state of every other node $i\in\mc R$ converges to a convex combination of the consensus values of the various sink components with weights $H_{ik}$ characterized by \eqref{H-characterization}. 
The initial states $x_i(0)$ of the nodes $i\in\mc R$ have thus no influence on the equilibrium state $x$. 
Equivalently, the equilibrium state $x$ of the averaging dynamics \eqref{compact} can be characterized as the solution of 
\be\label{Laplace+boundary}Lx=0\,,\qquad x_j=\ov x_k\qquad j\in\mc S_k\,,\ 1\le k\le s\,,\ee
which is refereed to as the \emph{Laplace equation} on $\mc G$ with \emph{boundary conditions} on $\mc S=\mc S_1\cup\ldots\cup\mc S_k$.

Two special cases are worth being examined:
\begin{itemize}
\item If there is a single sink component, $H=\1$, thus Theorem \ref{theo convergence} implies that 
$P_{\alpha}^tx(0)\stackrel{t\to\infty}{\longrightarrow}\1\pi'x(0)$. I.e., 
the system converges to consensus on a convex combination of the initial states of the nodes of the sink component, while all other nodes' initial states do not play any role. This is a generalization of Proposition \ref{prop convergence}.
\item If the sink components are all singletons, i.e., $\mc S_k=\{v_k\}$ for $1\le k\le s$, then
$x(t)\stackrel{t\to\infty}{\longrightarrow}H\ov x$, where $\ov x_k=x_{v_k}(0)$ for $1\le k\le s$. 
%Sink nodes in this case maintain their state constant in time, while all other nodes converge to equilibrium values that can be expressed as convex combinations of the states of the sink nodes. 
In this case, sink nodes keep their state constant in time. They are sometimes referred to as \emph{stubborn nodes} and are used to model opinion leaders in social networks \cite{Acemoglu.etal:2013}, or anchor nodes in robotic formation models \cite{Ji.ea:08}.
\end{itemize}

%The matrix $H$ has a number of interesting equivalent characterization which also turn out to be quite useful for computation purposes. As a preliminary consideration, notice that, from the definition (\ref{H}) it follows that $H$ can be equivalently characterized as the unique matrix $H\in\R^{\mc V\times s}$ such that 
%\beq\label{laplace characterization}
%((I-P_{\alpha})H)_v=0\;\forall v\in\mc W,\; H_{vj}=1\;\forall v\in \mc V_j,\; H_{vj}=0\;\forall v\in \mc V_i\;{\rm with}\, i\neq j\eeq
%The matrix $(I-P_{\alpha})$ is known in the literature as the Laplacian associated with averaging operator $P_{\alpha}$ and vectors $y$  like  $H$ is said to be harmonic on $\mc W$. 
An interesting probabilistic interpretation comes from considering a discrete-time Markov chain $X(t)$ with state space $\mc V$ and transition probability matrix $P_{\alpha}$. I.e., $X(t+1)$ is conditionally independent from the past $X(0),\ldots,X(t-1)$ given the present $X(t)$, and $\P(X(t+1)=j|X(t)=i)=(P_{\alpha})_{ij}$. It is well known that, with probability one,  $X(t)$ will enter one of the sink components in finite time and never leave it ever after. For $k=1,\dots ,s$, let $A_{k}$ be the event that $X(t)$ enters the sink component $\mc S_k$ before any other. Consider the matrix $M\in\R^{n\times s}$ with entries $M_{ik}=\P(A_{k}\,|\, X(0)=i)$. Then, a simple conditioning argument yields
$$\ba{rcl} M_{ik}&=&\P(A_{k}\,|\, X(0)=i)\\[10pt]&=&\sum\limits_{l\in\mc V}(P_{\alpha})_{il}\P(A_{k}\,|\, X(1)=l)=(P_{\alpha}M)_{ik}\ea$$
for all $i\in\mc V$ and $1\le k\le s$. Hence, $M$ solves \eqref{H-characterization} and we can deduce that $H=M$. In other terms, the  weight $H_{ik}$ that agent $i$ puts on $\ov x_k$ in determining its asymptotic state $x_i$ in \eqref{compact} can be interpreted as the probability that a Markov chain started at node $i$ and moving with transition probability matrix $P_{\alpha}$ hits the sink component $\mc S_k$ before any other.

%\section{ROBUSTNESS OF INVARIANT DISTRIBUTION??? }

\section{The electrical network interpretation}\label{sec:electrical}
Consider a graph $\mc G=(\mc V,\mc E,W)$ with sink components $\mc S_1,\ldots ,\mc S_s$, where $s\ge2$. Put $\mc S:=\bigcup_{1\le k\le s}\mc S_k$, $\mc R:=\mc V\setminus\mc S$, and $\mc S_{-k}:=\mc S\setminus\mc S_k$ for every $1\le k\le s$. 
Then, Theorem \ref{theo convergence} guarantees that the state of the averaging dynamics \eqref{compact} converges to an equilibrium $x=H\ov x$, where $H\in\R^{n\times s}$ is the stochastic matrix satisfying \eqref{H-characterization}
%\be\label{H-characterization}(P-I)H=0\,,\qquad H_{ij}=\left\{\ba{lcl}1&\se& i\in\mc S_j\\0&\se& i\in\mc S_{-j}\ea\r.
%\,,\qquad i\in\mc S\,,\quad 1\le j\le s\,,\ee 
and $\ov x\in\R^s$ is the vector of the weighted averages of the initial condition in the sink components. 
In this section, we focus on the special case when the restriction of the graph $\mc G$ to the node set $\mc R$ is connected and undirected, i.e., when 
\be\label{symmetryW}W_{ij}=W_{ji}\,,\qquad i,j\in\mc R\,.\ee

We will first interpret the equilibrium states of \eqref{compact} as voltages in an electrical network associated to $\mc G$, then relate them to the effective resistances in the network.  
Let us start by defining the link flows
\be\label{Ohm}f_{ij}=W_{ij}(x_i-x_j)\,,\quad i,j\in\mc V\,.\ee
The key consequence of \eqref{symmetryW} is that then $x=H\ov x$ and \eqref{H-characterization} imply the following conservation law: 
\be\label{Kirchoff}
\sum_{j}f_{ij}%=w_i\sum_{1\le k\le s}\ov x_k\sum_{j\in\mc V}P_{ij}(H_{ik}-H_{jk})
=0\,,\qquad\forall i\in\mc R\,.
\ee
Indeed, one can give the following interpretation: the link weights $W_{ij}$ represent conductances and their inverses are resistances; $x_i$ is the voltage in node $i$; and $f_{ij}$ is the electrical current flowing from node $i$ to node $j$. 
%the equilibrium state $x$ can be seen as the solution of the following quadratic optimization problem: 
%$$x=\argmin_{\substack{\ds y\in\R^n:\ \ds y_i=\ov x_j\\\ds  i\in\mc S_j,\ {1\le j\le s}}} \frac12\sum_{j\in\mc V}W_{ij}(y_i-y_j)^2\,.$$
%To see why the above holds true, observe that $$\frac{\partial}{\partial y_i}\frac12\sum_{j\in\mc V}W_{ij}(y_i-y_j)^2=\sum_{j\in\mc V}W_{ij}(y_i-y_j)\,,\qquad  \forall i\in\mc R\,.$$
Then, \eqref{Ohm} and \eqref{Kirchoff} can be read as the Ohm law and, respectively, the Kirchoff law. 

Such interpretation has deep implications. First, by simply verifying first-order conditions, one can show that \eqref{symmetryW} implies that, for $k=1,\ldots,s$, the $k$-th column of $H$ coincides with the solution of the following quadratic optimization problem
\be\label{heat-dissipation-1}\frac1{R_{\mc S_k\leftrightarrow S_{-k}}}=\min_{\substack{\ds y\in\R^n:\\ \ds y_i=1\ \quad i\in\mc S_k\\\ds y_i=0\ \ i\in\mc S_{-k}}} \frac12\sum_{i,j\in\mc V}W_{ij}(y_i-y_j)^2\,.\ee
The quantity that is to be minimized in the righthand side of \eqref{heat-dissipation-1} represents the energy dissipation in the network when the voltages  are $y_i$. Hence, $H\1_{\{k\}}$ is the vector of voltages with minimal energy dissipation under the constraints that the voltage is $1$ in $\mc S_k$ and $0$ in $\mc S\setminus\mc S_k$. The inverse $R_{\mc S_k\leftrightarrow S_{-k}}$ of such minimal energy dissipation  is known as the \emph{effective resistance} between the node sets $\mc S_k$ and $\mc S_{-k}$. \tcr{A classical duality result known as Thompson's principle \cite[Th.~9.10]{LevinPeresWilmer} states that $R_{\mc S_k\leftrightarrow S_{-k}}$ coincides with the minimal energy dissipation of a unitary flow from $\mc S_k$ to $\mc S_{-k}$ 
%\be\label{Thompson0}
\be\label{heat-dissipation-2}{R_{\mc S_k\leftrightarrow\mathcal S_{-k}}}=\min_{\substack{\ds\theta\in\R^{n\times n}:\\
\ds(\theta\1)'\1_{\mc S_{k}}=1\\
%\ds(\theta\1)'\1_{\mc S_{-k}}=-1\\ 
\ds(\theta\1)_i=0\quad i\in\mc R
}} \frac12\sum_{i,j\in\mc V}\frac1{W_{ij}}\theta_{ij}^2\,,\ee
and that the minimum above is achieved by \be\label{Thompson1}\theta_{ij}=W_{ij}(H_{ik}-H_{jk})R_{\mc S_k\leftrightarrow\mc S_{-k}}\,.\ee %with $f$ as in \eqref{Ohm}. 
%the inverse effective resistance between $\mc S_k$ and $\mc S_{-k}$ coincides with the total electrical current flowing out of $\mc S_k$, or, equivalently, into $\mc S_{-k}$, when all the voltages in $\mc S_{k}$ equal $1$ and all those in $\mc S_{-k}$ equal $0$,  i.e., 
{In particular, the fact that $\theta$ as defined in \eqref{Thompson1} is a unitary flow from $\mc S_k$ to $\mc S_{-k}$ (i.e., it satisfies the constraints in the righthand side of \eqref{heat-dissipation-2}) implies that its normalized version $R_{\mc S_k\leftrightarrow\mc S_{-k}}^{-1}\theta$  satisfies the following flow conservation equations} 
\be\label{Thompson}
\ba{rclc}\ds\frac1{R_{\mc S_k\leftrightarrow S_{-k}}}
&\!\!\!\!=\!\!\!\!&\ds\summ_{i\in\mc V}\summ_{j\in\mc S_k}W_{ij}(1-H_{ik}) &\left(\!\!\!\!\ba{c}\text{flow}\\\text{out of }\mc S_{k}\ea\!\!\!\!\right)\\
&\!\!\!\!=\!\!\!\!&\ds\summ_{i\in\mc U}\summ_{j\in\mc W}W_{ij}(H_{ik}-H_{jk}) &\left(\!\!\ba{c}\text{flow}\\\text{from }\mc U\\\text{ to }\mc W\ea\!\!\right)\\[15pt]
&\!\!\!\!=\!\!\!\!&\ds \sum_{i\in\mc V}\sum_{j\in\mc S_{-k}}W_{ij}H_{ik}\,,&\left(\!\!\!\!\ba{c}\text{flow}\\\text{into }\mc S_{-k}\ea\!\!\!\!\right)
\ea
\ee
where $\mc V=\mc U\cup\mc W$, $\mc U\cap\mc W=\emptyset$ is any cut of $\mc G$ such that $\mc S_{k}\subseteq\mc U$ and $\mc S_{-k}\subseteq\mc W$.}

One key advantage of the electrical network interpretation is that the relations \eqref{heat-dissipation-1} and \eqref{Thompson} imply simple and powerful rules, e.g., the parallel and series laws \cite[pp.~135--136]{LevinPeresWilmer}, to compute or estimate the equilibrium state $x$. %without the need to solve directly the Laplace equation. % which allow one to replace multiple parallel links between two nodes or a line of links by a single link with conductance equal to, respectively, the sum of the conductances and the reciprocal of the sum of the reciprocal of the conductances, while maintaining the same voltages in the nodes of the electrical network. 
Another fundamental property is monotonicity with respect to changes of the network (known as Rayleigh's law): the effective resistance is never increased when new links are added, or when the conductance, i.e., the weight, of some existing links is increased ---including when two nodes in $\mc R$ are glued together, that is equivalent to the addition of an infinite weight link between them. \cite[Theorem 9.12]{LevinPeresWilmer}
\tcr{In fact, the equilibrium states $x_i$ can be expressed purely in terms of effective resistances as stated in the following.  
\begin{theorem}\label{theorem-resistances} Let $\mc G=(\mc V,\mc E,W)$ be a graph with sink components $\mc S_1,\ldots ,\mc S_s$, $s\ge2$, whose restriction to $\mc R=\mc V\setminus\mc S$, where $\mc S=\bigcup_{1\le k\le s}\mc S_k$, is connected and undirected (as per \eqref{symmetryW}). Then, the equilibrium state vector $x$ of \eqref{compact} satisfies, for $i\in\mc V$,
\beq\label{electrical bias}x_i=\frac12\sum_{1\le k\le s}\ov x_k\left(1+\frac{R_{i\leftrightarrow\mc S_{-k}}-R_{i\leftrightarrow\mc S_{k}}}{R_{\mc S_k\leftrightarrow\mc S_{-k}}}\right)\,,\eeq 
where $R_{i\leftrightarrow\mc S_{k}}$ and $R_{i\leftrightarrow\mc S_{-k}}$ are the effective resistances between node $i$ and $\mc S_k$ and $\mc S_{-k}$, respectively, in the graph obtained  from $\mc G$ by glueing together all nodes in $\mc S_k$ and $\mc S_{-k}$ respectively into single nodes.
\end{theorem}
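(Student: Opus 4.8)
The plan is to reduce the $s$-sink problem to a superposition of two-sink problems and then, in the two-sink case, to identify the effective resistances in the glued graphs with quantities already appearing in the energy-dissipation characterization \eqref{heat-dissipation-1}--\eqref{Thompson}. First I would observe that by linearity of the equilibrium equation \eqref{Laplace+boundary} in the boundary data $\ov x$, it suffices to prove the formula when $\ov x = \1_{\{k\}}$ for a single $k$, i.e.\ to show that the $k$-th column of $H$ satisfies
\be\label{Hcol-res}
H_{ik}=\frac12\left(1+\frac{R_{i\leftrightarrow\mc S_{-k}}-R_{i\leftrightarrow\mc S_k}}{R_{\mc S_k\leftrightarrow\mc S_{-k}}}\right)\,,\qquad i\in\mc V\,.
\ee
Indeed, summing \eqref{Hcol-res} against $\ov x_k$ over $k$ and using $H\1=\1$ (from Theorem \ref{theo convergence}) gives \eqref{electrical bias}; note also that for $i\in\mc S_k$ one has $R_{i\leftrightarrow\mc S_k}=0$ and $R_{i\leftrightarrow\mc S_{-k}}=R_{\mc S_k\leftrightarrow\mc S_{-k}}$, so \eqref{Hcol-res} correctly returns $H_{ik}=1$, and symmetrically $H_{ik}=0$ for $i\in\mc S_{-k}$; so the content is the formula for $i\in\mc R$.

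Next I would fix $k$ and $i\in\mc R$ and work in the graph $\mc G$ with all of $\mc S_k$ glued to a single node $a$ and all of $\mc S_{-k}$ glued to a single node $b$; this glueing does not change $H\1_{\{k\}}$ (the column of $H$ depends only on the boundary values, which are constant $1$ on $\mc S_k$ and $0$ on $\mc S_{-k}$), and by Rayleigh-type reasoning it is the natural arena in which $R_{i\leftrightarrow\mc S_k}=R_{i\leftrightarrow a}$, $R_{i\leftrightarrow\mc S_{-k}}=R_{i\leftrightarrow b}$, $R_{\mc S_k\leftrightarrow\mc S_{-k}}=R_{a\leftrightarrow b}$ all live. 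Now $H_{ik}$ is, by the probabilistic interpretation given after Theorem \ref{theo convergence}, the probability that the chain started at $i$ hits $a$ before $b$ (for the undirected/reversible chain this is a genuine hitting probability, and it equals the harmonic function with the stated boundary values). The standard identity from electrical network theory — which I would derive here from the flow-conservation relations \eqref{Thompson} rather than cite — expresses this harmonic function through three effective resistances: injecting the unit current that realizes $R_{a\leftrightarrow b}$, the voltage at $i$ is exactly $H_{ik}$, and decomposing that current-flow as the superposition of a unit flow from $i$ to $b$ and a unit flow from $a$ to $i$ (each scaled appropriately) yields the voltage drops $\tfrac12(R_{a\leftrightarrow b}+R_{i\leftrightarrow b}-R_{i\leftrightarrow a})$ across the relevant terminals. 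Dividing by $R_{a\leftrightarrow b}=R_{\mc S_k\leftrightarrow\mc S_{-k}}$ gives \eqref{Hcol-res}.

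Concretely, the cleanest route to the three-resistance identity is the "parallelogram" computation: let $\varphi^{(i,b)}$ and $\varphi^{(a,i)}$ be the voltage functions (harmonic off the terminal pairs, with unit current) for the pairs $(i,b)$ and $(a,i)$; then $\varphi:=\varphi^{(i,b)}+\varphi^{(a,i)}$ is harmonic off $\{a,b\}$ and carries unit current from $a$ to $b$, hence $\varphi=\varphi^{(a,b)}$ up to an additive constant, so $\varphi^{(a,b)}(a)-\varphi^{(a,b)}(b)=R_{a\leftrightarrow b}$ while $\varphi^{(a,b)}(i)-\varphi^{(a,b)}(b)=\varphi^{(i,b)}(i)-\varphi^{(i,b)}(b)+\varphi^{(a,i)}(i)-\varphi^{(a,i)}(b)=R_{i\leftrightarrow b}+\bigl(\varphi^{(a,i)}(i)-\varphi^{(a,i)}(b)\bigr)$, and the last term is $-\bigl(\varphi^{(a,i)}(a)-\varphi^{(a,i)}(i)\bigr)+\bigl(\varphi^{(a,i)}(a)-\varphi^{(a,i)}(b)\bigr)=-R_{a\leftrightarrow i}+\bigl(\varphi^{(a,i)}(a)-\varphi^{(a,i)}(b)\bigr)$; symmetrizing (or averaging the decomposition with the mirror one using $(i,a)$ and $(b,i)$) eliminates the residual unknown drop $\varphi^{(a,i)}(a)-\varphi^{(a,i)}(b)$ and leaves exactly $H_{ik}=\bigl(\varphi^{(a,b)}(i)-\varphi^{(a,b)}(b)\bigr)/R_{a\leftrightarrow b}=\tfrac12\bigl(R_{a\leftrightarrow b}+R_{i\leftrightarrow b}-R_{i\leftrightarrow a}\bigr)/R_{a\leftrightarrow b}$.

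I expect the main obstacle to be bookkeeping rather than depth: one must be careful that glueing nodes of $\mc S_k$ (resp.\ $\mc S_{-k}$) together genuinely leaves $H\1_{\{k\}}$ unchanged and that the effective resistances computed in the glued graph are the ones named in the statement — this uses the undirectedness hypothesis \eqref{symmetryW} crucially, since only then is the relevant harmonic function a reversible-chain hitting probability and only then do the energy-minimization characterizations \eqref{heat-dissipation-1}--\eqref{Thompson} and the series/parallel/glueing rules apply. The symmetrization step that kills the spurious additive constant should be handled cleanly, e.g.\ by averaging the $(i,b)+(a,i)$ decomposition against the $(b,i)+(i,a)$ decomposition, so that the identity $R_{i\leftrightarrow b}-R_{i\leftrightarrow a}$ emerges symmetrically; everything else is first-order conditions for the quadratic programs, which the text already grants us.
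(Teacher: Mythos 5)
Your overall strategy coincides with the paper's: reduce by linearity to a single column of $H$, glue $\mc S_k$ and $\mc S_{-k}$ into two terminals $a$ and $b$, identify $H\1_{\{k\}}$ with the harmonic function (equivalently the hitting probability) with boundary values $1$ at $a$ and $0$ at $b$, and then establish the three-resistance identity $H_{ik}=\bigl(R_{a\leftrightarrow b}+R_{i\leftrightarrow b}-R_{i\leftrightarrow a}\bigr)/(2R_{a\leftrightarrow b})$. The preliminary bookkeeping (glueing leaves the column of $H$ unchanged; the formula degenerates correctly on $\mc S$) is fine.

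There is, however, a genuine gap in your derivation of the three-resistance identity, precisely at the step you flag as "symmetrization." Writing $\varphi^{(a,b)}=\varphi^{(i,b)}+\varphi^{(a,i)}$ (up to an additive constant) and telescoping gives
$\varphi^{(a,b)}(i)-\varphi^{(a,b)}(b)=R_{i\leftrightarrow b}-R_{a\leftrightarrow i}+\bigl(\varphi^{(a,i)}(a)-\varphi^{(a,i)}(b)\bigr)$,
and the residual term $\varphi^{(a,i)}(a)-\varphi^{(a,i)}(b)$ is \emph{not} removed by averaging against the "mirror" decomposition $\varphi^{(b,a)}=\varphi^{(i,a)}+\varphi^{(b,i)}$: since $\varphi^{(j,h)}=-\varphi^{(h,j)}$ up to a constant, the mirror decomposition is the negative of the original and reproduces the identical expression, so the average yields nothing new. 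What actually kills the residual term is the reciprocity theorem, $\varphi^{(a,i)}(a)-\varphi^{(a,i)}(b)=\varphi^{(a,b)}(a)-\varphi^{(a,b)}(i)$, i.e.\ the symmetry of the bilinear form $(\1_{\{u\}}-\1_{\{v\}})'G(\1_{\{h\}}-\1_{\{j\}})$ in its two argument pairs; substituting this and using $\varphi^{(a,b)}(a)-\varphi^{(a,b)}(b)=R_{a\leftrightarrow b}$ produces the factor $2$ on the left and closes the argument. This is exactly the ingredient the paper builds in from the start: it constructs the Green matrix $G=G'=\sum_{l\ge2}\lambda_l^{-1}\phi_{(l)}\phi_{(l)}'$ of the glued Laplacian, reads off $\hat R_{hj}=G_{hh}-2G_{hj}+G_{jj}$, and the needed symmetry is manifest as $G_{hj}=G_{jh}$. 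Your proof becomes complete once you replace the averaging step by an explicit appeal to reciprocity (which, as you correctly note, is where the undirectedness hypothesis \eqref{symmetryW} enters).
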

%\begin{proof}
%The claim follows from \eqref{limx} and the identities
%%$$H_{ik}=Z_{}$$
%$$H_{ik}=\frac{R_{\mc S_k\leftrightarrow\mc S_{-k}}+R_{i\leftrightarrow\mc S_{-k}}-R_{i\leftrightarrow\mc S_{k}}}{2R_{\mc S_k\leftrightarrow\mc S_{-k}}}\,,$$ 
%for $i\in\mc V$ and  $1\le k\le s$, that can be proved using either algebraic or probabilistic (cf.~\cite[Ex.~10.8]{LevinPeresWilmer}) tools. \qed
%\end{proof}}
{\begin{proof}
Let $\hat{\mc G}=(\hat{\mc V},\hat{\mc E},\hat W)$ be the connected undirected graph obtained from $\mc G$ by merging all nodes in $\mc S_k$ and $\mc S_{-k}$ into single nodes $v$ and $\ov v$, respectively, and making the links incident in $v$ and $\ov v$ bidirectional. Let its Laplacian $\hat L=\diag(\hat W\1)-\hat W$ have eigenvalues $0=\lambda_1<\lambda_2\le\ldots\le\lambda_n$ and corresponding orthonormal base of eigenvectors $\frac1{\sqrt n}\1=\phi_{(1)},\phi_{(2)},\ldots,\phi_{(n)}$. Define the Green matrix 
$$G=G'=\sum_{2\le l\le n}\frac1{\lambda_l}\phi_{(l)}\phi_{(l)}'$$ 
and observe that $G\1=0$ and
$$LG=\sum_{2\le l\le n}\frac1{\lambda_l}L\phi_{(l)}\phi_{(l)}'=\sum_{2\le l\le n}\phi_{(l)}\phi_{(l)}'=I-\frac1n\1\1'\,.$$
From the above and $L\1=0$ it can be deduced that, for $h,j\in\hat{\mc V}$, all solutions $y$ of $Ly=(\1_{\{h\}}-\1_{\{j\}})$ can be written as $y=G(\1_{\{h\}}-\1_{\{j\}})+\alpha\1$ for some scalar $\alpha$. 
%Let $\hat R_{hj}$ be the effective resistance between nodes $h$ and $j$ in $\hat{\mc G}$. %weight matrix, $\hat w=\hat W\1$, $\hat P=\hat D^{-1}$, and $\hat L=\hat D-\hat W$.
%Then, for every pair of nodes 
Now, let $\hat R_{hj}$ be the effective resistance between $h$ and $j$ in $\hat{\mc G}$
and $x^{(h,j)}$ be the solution of  $(Lx^{(h,j)})_i=0$ for \tcb{$i\ne h,j$}, 
$x^{(h,j)}_h=1$, $x^{(h,j)}_j=0$.  
Arguing as in \eqref{Thompson} gives $Lx^{(h,j)}=\hat R_{hj}^{-1}(\1_{\{h\}}-\1_{\{j\}})$,
from which we deduce that 
\be\label{R1}\hat R_{hj}x^{(h,j)}=G(\1_{\{h\}}-\1_{\{j\}})+\alpha\1\,,\ee 
for some scalar $\alpha$.  % depending on $h$ and $j$. 
It follows from \eqref{R1} that 
\be\label{R2}\ba{rcl}\hat R_{hj}
&=&\hat R_{hj}(x^{(h,j)}_h-x^{(h,j)}_j)\\
&=&(\1_{\{h\}}-\1_{\{j\}})'G(\1_{\{h\}}-\1_{\{j\}})\\
&=&G_{hh}-G_{hj}-G_{jh}+G_{jj}\\
&=&G_{hh}-2G_{hj}+G_{jj}\,.
\ea\ee
By applying \eqref{R1} with $h=v$ and $j=\ov v$, and \eqref{R2} twice, with $h=l$ and $j=v$ first, and then with $h=l$ and $j=\ov v$, one gets, for $l\in\hat{\mc V}$, 
\be\label{R3}\!\!\!\!\!\ba{rcl}
2\hat R_{v\ov v}   x_l^{(v,\ov v)}\!\!
&\!\!=\!\!&2(G_{lv}-G_{l\ov v}+\alpha)\\
&\!\!=\!\!&{G_{vv}-\hat R_{lv}-G_{\ov v\ov v}+\hat R_{l \ov v}}+2\alpha\,.
\ea\ee
Choosing $l=\ov v$ and recalling that $x_{\ov v}^{(v,\ov v)}=0=\hat R_{\ov v\ov v}$  gives \tcb{$2\alpha=\hat R_{v\ov v}+G_{\ov v\ov v}-G_{vv}$} in \eqref{R3}. Substituting back in \eqref{R3} and noting that $x_i^{(v,\ov v)}=H_{ik}$, one gets 
\be\label{R}H_{ik}=\frac{R_{\mc S_k\leftrightarrow\mc S_{-k}}+R_{i\leftrightarrow\mc S_{-k}}-R_{i\leftrightarrow\mc S_{k}}}{2R_{\mc S_k\leftrightarrow\mc S_{-k}}}\,,\ee
for $i\in\mc V$ and  $1\le k\le s$. The claim now follows by substituting \eqref{R} into \eqref{limx}. 
We observe that alternative proofs of \eqref{R} have been proposed based on Markov chain arguments (cf.~\cite[Ex.~10.8]{LevinPeresWilmer}). 
\qed\end{proof}}

An insightful special case of Theorem \ref{theorem-resistances} is when $\mc G$ has two sink components $\mc S^+$ and $\mc S^-$ with values $\ov x^+=1$ and $\ov x^-=0$, respectively. Then, \eqref{electrical bias} reads
$$x_i=\frac12+\frac{R_{i\leftrightarrow\mc S^{-}}-R_{i\leftrightarrow\mc S^{+}}}{2R_{\mc S^+\leftrightarrow\mc S^{-}}}\,,\qquad i\in\mc V\,.$$
The sign of the difference between the two effective resistances determines if node $i$ will be more influenced by $\mc S^+$ or $\mc S^-$. In this sense, formula (\ref{electrical bias}) expresses the bias of a node towards a sink component as determined by the electrical resistance to that sink in comparison to the others. Such electrical equivalence has recently found applications in the design of efficient distributed algorithms for the optimal stubborn node placement problem. \cite{Vassio.etal:2014}

%Similarly, Nodes at the same voltage can be glued together without altering the voltage values in any node.

%Finally, we want to introduce the useful concept of effective resistance between nodes. Given an undirected strongly connected weighted graph $\mc G=(\mc V,\mc E,W)$ and two nodes $u, v\in \mc V$, we consider the input current  $\theta\in\R^{\mc V}$ which is always $0$ but in $u$ and $v$ with $\theta_u=1$ and $\theta_v=-1$. Let $x$ be the corresponding voltage function (determined up to a translation term). The effective resistance between $u$ and $v$ is defined by $R_{\rm eff}(u,v):=x_u-x_{v}$.
\section{Polarization and homogeneous influence} \label{sec:polarizedvshomo}
In this section, we consider graphs %$\mc G=(\mc V,\mc E,W)$ 
with $s\ge2$ singleton sink components $\mc S_1=\{v_1\},\ldots,\mc S_s=\{v_s\}$ with values $\ov x_1,\ldots, \ov x_s\in[0,1]$. We investigate conditions  ---on the graph structure and on the size of the sink components--- under which most of the entries $x_i$ of the equilibrium state $x=H\ov x$ of the averaging dynamics \eqref{compact} are close to a common value $\tilde x$ which is a convex combination of the $\ov x_j$s, or rather they are all close to one of the extreme values $\ov x_1,\ldots, \ov x_s$. In order to formalize these notions, we consider infinite sequences of graphs (typically of increasing size), and briefly refer to them as (large-scale) networks. 
Following \cite{Acemoglu.etal:2013}, we say that the sink components $\mc S=\{v_1,\ldots,v_s\}$ have \emph{homogeneous influence} on (the rest of) the network if,  for all $\eps>0$,
\be\label{homogeneous-influence} \lim\limits_{n\to\infty}\inf_{\tilde x}\frac1n\left|\left\{i\in\mc V:\,|x_i-\tilde x|<\eps\right\}\right| =1\,.\ee
On the other hand, we refer to a network as \emph{polarized} if, for all $\eps>0$,
\be\label{polarization} \lim\limits_{n\to\infty}\frac1n\left|\left\{i\in\mc V:\,\min_{1\le k\le s}|x_i-\ov x_k|<\eps\right\}\right| =1\,.\ee
As in the previous section, we confine our discussion to the special case when the restriction of the graph $\mc G$ to $\mc R=\mc V\setminus\mc S$ is connected and undirected.
%, i.e., when \eqref{symmetryW} holds true. 
\tcr{In fact, for the sake of simplicity, we mostly focus on an even more special graph structure} %whose purpose is to illustrate more general phenomena. 
%In this section we focus on the case where the network $\mc G$ comprises two stubborn nodes, $v_0$ and $v_1$, and its restriction to $\mc R$ is connected and undirected. We will also assume that that the weights of the self-loops have been chosen in such a way that that $\mc G$ is balanced, i.e., $W_{v_0v_0}=\sum_{i}W_{iv_0}$ $W_{v_1v_1}=\sum_{i}W_{iv_1}$. We let $w=W\1$ be the degree vector and $\ov w=\1'w/n$ be the average degree. Consider the special case of a network 
$\mc G=(\mc V,\mc E,W)$ whose node set $\mc V$ consists of only two stubborn nodes, $v_0$ and $v_1$, and two disjoint sets of regular nodes, $\mc U_0$ and $\mc U_1$, such that: the nodes in $\mc U_0$ (respectively, $\mc U_1$) are all connected by a weight-$\gamma$ directed link to $v_0$ ($v_1$); the subnetwork obtained by removing the stubborn nodes from $\mc G$ is undirected and connected; 
the aggregate weight $\sum_{j\in\mc U_1}W_{ij}$ (respectively, $\sum_{j\in\mc U_0}W_{ij}$) of links connecting a node $i\in \mc U_0$ ($i\in \mc U_1$) to nodes in $\mc U_1$ ($\mc U_0$) is a positive constant $\beta_0$ ($\beta_1$) independent of $i$. In other words, we consider a network whose weight matrix $W$ has the structure 
\be\label{Wstructure}W=\left[\!\!\!\ba{cccc}\gamma&\!\!\!\!0\ldots&\ldots 0\!\!\!\!&0\\ 
\ba{c}\vdots\\\gamma\ea&A&B&\ba{c}\vdots\\0\ea\\
\ba{c}0\\\vdots\ea&C&D&\ba{c}\gamma\\\vdots\ea\\  
0&\!\!\!\!0\ldots&\ldots 0\!\!\!\!&\gamma\ea\!\!\!\right]\quad 
\ba{ccc}
 A=A'\\[3pt] B=C'\\[3pt] C=B'\\[3pt] D=D'\\[3pt] 
 B\1=\beta_0\1\\[3pt]C\1=\beta_1\1\,.\ea
%\left[\ba{cccc}1&0 \ldots 0&0 \ldots 0&0\\ 
%\ba{c}1\\\vdots\\1\ea&A&C&\ba{c}0\\\vdots\\0\ea\\
%\ba{c}0\\\vdots\\0\ea&C'&B&\ba{c}1\\\vdots\\1\ea\\ 
%0&0 \ldots 0&0 \ldots 0&1
%\ea\right]
\ee
%where $A=A'$, $B=C'$, $C=B'$, and $D=D'$ are nonnegative matrices of dimension $n_0\times n_0$, $n_0\times n_1$, $n_1\times n_0$, and $n_1\times n_1$, respectively, such that, where 
%$$n_0=|\mc U_0|\,,\qquad n_1=|\mc U_1|\,,\qquad m=\sum_{i\in\mc U_0}\sum_{j\in\mc U_1}W_{ij}$$ are respectively the 
%sizes of the two node sets and the aggregate weight of links connecting nodes in $\mc U_0$ to nodes in $\mc U_1$. 
\begin{proposition} \label{proposition:polarization}
Let $\mc G=(\mc V,\mc E,W)$ be a graph with  weight matrix $W$ as in \eqref{Wstructure}. Let the stubborn nodes be assigned values $\ov x_{v_0}=0$ and $\ov x_{v_1}=1$. 
Let $x=H\ov x$ be the equilibrium state of the averaging dynamics \eqref{compact} and  $y_h:=\frac1{n_h}\sum_{i\in\mc U_h}x_i$ for $h=0,1$ be the average states in the two subsets of nodes. 
Then, 
%\be\label{polarization1}
$$\ba{c} \ds |h-y_h|\le\l({\ds 1+\frac{n_{h}}{n_{1-h}}+\frac{\gamma}{\beta_h}}\r)^{-1}\,,\qquad h=0,1\,,\\[5pt]
%\\[5pt] 1-y_1\le\l({\ds 1+\frac{n_1}{n_0}+\frac{\gamma}{\beta_1}}\r)^{-1}\\[5pt]
\ds y_1-y_0\le\l({\ds 1+\frac{\beta_0}{\gamma}+\frac{\beta_1}{\gamma}}\r)^{-1}
\,.\ea$$
%\ee
\end{proposition}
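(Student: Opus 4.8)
The starting point is the Laplace-equation characterization \eqref{Laplace+boundary}: by Theorem~\ref{theo convergence}, the equilibrium $x=H\ov x$ of \eqref{compact} satisfies $(Lx)_i=\sum_j W_{ij}(x_i-x_j)=0$ for every regular node $i\in\mc R=\mc U_0\cup\mc U_1$, with boundary values $x_{v_0}=\ov x_{v_0}=0$ and $x_{v_1}=\ov x_{v_1}=1$; since $H$ is stochastic one also has $x_i=H_{iv_1}\in[0,1]$, so that $|0-y_0|=y_0$ and $|1-y_1|=1-y_1$. The plan is to reduce this a~priori $|\mc R|$-dimensional linear system to a $2\times2$ system in the two scalars $y_0,y_1$ by summing the nodal balance equations over well-chosen subsets of $\mc R$ and exploiting the block structure \eqref{Wstructure}, which makes every ``internal'' contribution either cancel by antisymmetry or collapse into a $\beta$-weighted average. (Equivalently, these summed relations are the Kirchhoff current-conservation laws across suitable cuts, in the electrical reading of Section~\ref{sec:electrical}.)

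First I would sum $(Lx)_i=0$ over all $i\in\mc R$. The terms attached to links internal to $\mc R$ cancel pairwise because $W$ is symmetric on $\mc R\times\mc R$, leaving only the links to the stubborn nodes: $\gamma\sum_{i\in\mc U_0}x_i+\gamma\sum_{i\in\mc U_1}(x_i-1)=0$, i.e. $n_0y_0=n_1(1-y_1)$. Next I would sum $(Lx)_i=0$ over $i\in\mc U_0$ only; now the $\mc U_0$-internal terms cancel since $A=A'$, while $B\1=\beta_0\1$ gives $\sum_{i\in\mc U_0,\,j\in\mc U_1}B_{ij}x_i=\beta_0n_0y_0$ and $B'=C$ together with $C\1=\beta_1\1$ gives $\sum_{i\in\mc U_0,\,j\in\mc U_1}B_{ij}x_j=\beta_1n_1y_1$; hence $(\gamma+\beta_0)n_0y_0=\beta_1n_1y_1$. (Summing over $\mc U_1$ instead yields a third relation that turns out to be a consequence of the first two.) Finally I would record the identity $\beta_0n_0=\beta_1n_1$, obtained by evaluating the total $\mc U_0$--$\mc U_1$ cross-weight $\1'B\1$ in two ways and using $B'=C$; this is precisely what keeps the $2\times2$ system nondegenerate.

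Solving $n_0y_0=n_1(1-y_1)$ and $(\gamma+\beta_0)n_0y_0=\beta_1n_1y_1$ together with $\beta_0n_0=\beta_1n_1$ then gives the closed forms $y_0=\beta_0/(\beta_0+\beta_1+\gamma)$, $1-y_1=\beta_1/(\beta_0+\beta_1+\gamma)$ and $y_1-y_0=\gamma/(\beta_0+\beta_1+\gamma)$. Rewriting $\beta_1/\beta_0=n_0/n_1$ and $\beta_0/\beta_1=n_1/n_0$ via the cross-weight identity turns these into exactly the three right-hand sides of the statement; in fact all the claimed bounds hold with equality. There is no deep obstacle here: the only points requiring care are the bookkeeping of the boundary terms when summing the Laplace equations, and noticing that the two row-sum conditions in \eqref{Wstructure}, combined with the symmetry of $W$ on $\mc R\times\mc R$, force $\beta_0n_0=\beta_1n_1$ --- without which $y_0$ and $y_1$ would not be pinned down.
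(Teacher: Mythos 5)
Your proof is correct, and it takes a genuinely different (and in fact sharper) route than the paper. The paper works through the electrical interpretation of Section~\ref{sec:electrical}: it identifies $\gamma n_0y_0$, $\gamma n_1(1-y_1)$ and the flow across the $\mc U_0|\mc U_1$ cut with $1/R_{v_0\leftrightarrow v_1}$ via \eqref{Thompson}, and then \emph{lower-bounds} $R_{v_0\leftrightarrow v_1}$ by merging each $\mc U_h$ into a single node and applying the series/parallel laws (Rayleigh monotonicity), which yields the three inequalities. You instead sum the nodal balance equations $(Lx)_i=0$ over $\mc R$ and over $\mc U_0$, use the symmetry of $A$, $D$ and $B=C'$ for the cancellations, and close the system with $\beta_0n_0=\beta_1n_1$ (from counting $\1'B\1$ two ways); your three aggregated identities are exactly the three conservation laws in \eqref{Thompson}, but by solving the resulting $2\times2$ system you obtain the \emph{exact} values $y_0=\beta_0/(\gamma+\beta_0+\beta_1)$, $1-y_1=\beta_1/(\gamma+\beta_0+\beta_1)$, $y_1-y_0=\gamma/(\gamma+\beta_0+\beta_1)$, showing all three stated bounds hold with equality. (This is consistent with the paper: under the row-sum hypotheses $B\1=\beta_0\1$, $C\1=\beta_1\1$ the merging step loses nothing, so the paper's resistance bound is also tight here.) What the paper's route buys is robustness: Rayleigh monotonicity still gives valid one-sided bounds if the regularity conditions on $B$ and $C$ are relaxed, and it connects the proposition to the resistance machinery used elsewhere; what your route buys is a self-contained elementary argument and closed-form answers. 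All your algebraic steps check out, including the boundary-term bookkeeping and the observation that summing over $\mc U_1$ gives nothing new.
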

\begin{proof} 
Since the restriction of the graph to $\mc R$ is undirected, we can use the electrical circuit interpretation of Section \ref{sec:electrical}. In particular, \eqref{Thompson} yields
\be\label{conservation}
\ba{rcl}\ds\frac{1}{R_{v_0 \leftrightarrow v_1}}&=&
{\ds\gamma\sum_{i\in\mc U_0}x_i} 
={\ds\sum_{i\in\mc U_0}\sum_{j\in\mc U_1}W_{ij}(x_j-x_1)}\\[15pt]
&=&{\ds\gamma\sum_{j\in\mc U_1}(1-x_j)}\,.\ea
%\ba{c}\text{current}\\\text{in }v_0\ea
%&&\ba{c}\text{total current}\\\text{from }\mc U_1\text{ to }\mc U_0\ea
%&&\ba{c}\text{current}\\\text{from }v_1\ea
%\ea
\ee
Moreover, we can get a lower bound on the effective resistance by merging, for $h=0,1$, all nodes in $\mc U_h$ into a single node $u_h$ and applying the parallel and series law to the resulting network, thus getting 
%\be
%\label{resistance-bound}
$$R_{v_0\leftrightarrow v_1}\ge\frac1{\gamma n_0}+\frac1{n_0\beta_0}+\frac1{\gamma n_1}=\frac1{\gamma n_0}+\frac1{n_1\beta_1}+\frac1{\gamma n_1}\,.$$
%\ee
Then, the claim follows by substituting the identity \eqref{conservation} in the lefthand side of the above. 
%\eqref{resistance-bound}. 
%that 
%\be\label{polarization1} y_0\le\frac{1}{\ds 1+\frac{n_0}{n_1}+\frac{n_0}m}\,,\qquad  
%1-y_1\le\frac{1}{\ds 1+\frac{n_1}{n_0}+\frac{n_1}m}\,,\qquad
%y_1-y_0\le\frac{1}{\ds 1+\frac m{n_0}+\frac{m}{n_1}}
%\,.\ee
%
\qed\end{proof}

It follows from Proposition \ref{proposition:polarization} that, for a network structure as in \eqref{Wstructure}, 
\begin{description}
\item[(i)] if $\gamma\gg\max\{\beta_0,\beta_1\}$, then $y_0\to0$ and $y_1\to1$; 
\item[(ii)] if $\gamma\ll\max\{\beta_0,\beta_1\}$, then $y_1-y_0\to0$. 
\end{description}
Observe that, since, for all $\eps>0$ and $h=0,1$,
$$\frac1n|\{i\in\mc U_h:\,|x_i-h|\ge\eps\}|\le\frac{|h-y_h|}{\eps}\,,$$
%$$\frac1n|\{i\in\mc U_0:\,x_i\ge\eps\}|\le\frac{y_0}{\eps}\,,$$
%$$ \frac1n|\{i\in\mc U_1:\,x_i\le1-\eps\}|\le\frac{1-y_1}{\eps}\,,$$ 
point (i) above implies that the network is polarized if $\gamma\gg\max\{\beta_0,\beta_1\}$. The intuition behind this result is that, if $\gamma\gg\max\{\beta_0,\beta_1\}$, then, in both $\mc U_0$ and $\mc U_1$, the total weight of links towards the stubborn node $v_0$ (respectively, $v_1$) is much larger than the total weight of links to the other set of nodes. Then, for $h=0,1$, nodes in $\mc U_h$ get influenced by $v_h$ much more than by the other set of nodes $\mc U_{1-h}$. 

%These results have been generalized to arbitrary networks as follows. 

On the other hand, observe that $y_1-y_0\to0$ does not imply homogeneous influence. % (it is not hard to provide counterexamples).  
Sufficient conditions for homogeneous influence have been proved in \cite{Acemoglu.etal:2013} based on finer properties of the graph $\mc G$, in particular on its mixing time, as per the following. 

\begin{theorem}[\cite{Acemoglu.etal:2013}]\label{theo:homog}
Let $\mc G=(\mc V,\mc E,W)$ be a graph with sink components $\mc S_1,\ldots,S_s$, let $\mc S=\bigcup_{1\le j\le s}\mc S_j$ and $\mc R=\mc V\setminus\mc S$. Let $\tilde{\mc G}=(\mc V,\tilde{\mc E},\tilde W)$ be the undirected graph obtained from $\mc G$ by 
making all the directed links from some node in $\mc R$ to some node in $\mc S$ bidirectional 
%adding links from nodes in $\mc S$ to nodes in $\mc R$ 
and letting the modified weight matrix $\tilde W$ coincide with $W$ in its $\mc R\times \mc V$ block, and be such that 
$\tilde W_{ij}=W_{ji}$ for all $i\in\mc S$ and $j\in\mc R$, and $\tilde W_{ij}=0$ for all $i,j\in\mc S$. 
Assume that $\tilde{\mc G}$ is connected, and let $\tilde P=\diag(\tilde W\1)^{-1}\tilde W$, $\tilde\pi=\tilde P'\tilde\pi$ be its invariant probability vector, and $\tilde\tau$ be the mixing time of $\frac12(I+\tilde P)$. 
%Let $\tilde P$ be an irreducible stochastic matrix on $\mc V$ that coincides with $P$ outside $\mc S$, i.e., such that $$\tilde P_{ij}=P_{ij}\,, \qquad i\in\mc R\,,\quad j\in\mc V\,,$$ and let $\tilde \pi$ be its invariant probability vector. For $0\le\alpha\le1$, let $\tilde P_{\alpha}=\alpha I+(1-\alpha)\tilde P$ and let $\tilde\tau_{\alpha}$ be its mixing time. 
Then, the equilibrium state $x$ of  \eqref{compact}  satisfies 
%$$\frac1n\left|\left\{i\in\mc V:\, |x_i-\tilde x|\ge\eps\right\}\right|\le\frac{\ov w}{\eps w_*}\psi(\tilde\tau\cdot(w_{v_0}+w_{v_1})/n\ov w)$$
%for every $\eps>0$, where $\tilde\pi_*:=\min_{i\in\mc V}\tilde\pi_i$, and $\psi(y):=y\log(e^2/y)$. 
$$\frac1n\left|\left\{i\in\mc V:\, |x_i-\tilde x|\ge\eps\right\}\right|\le\frac{\Delta}{\eps n\tilde\pi_*}\psi(\tilde\tau\cdot\tilde\pi_{\mc S})\,,$$
for $\eps>0$, where $\tilde x=\tilde\pi'x$, $\Delta=\max_{1\le i,j\le s}\{\ov x_i-\ov x_j\}$, $\tilde\pi_*:=\min_{i\in\mc V}\tilde\pi_i$,  and $\psi(y):=y\log(e^2/y)$. 
\end{theorem}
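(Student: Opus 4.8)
The plan is to realize the equilibrium $x$ as a harmonic function of a reversible, rapidly mixing Markov chain, and then invoke the very $\ell^1$ mixing estimate already used in the proof of Proposition~\ref{prop convergence}. The role of the auxiliary graph $\tilde{\mc G}$ is twofold. First, since $\tilde W$ agrees with $W$ on its $\mc R\times\mc V$ block, the transition matrix $\tilde P$, and hence the lazy chain $\bar P:=\tfrac12(I+\tilde P)$, coincides with $P$ on every row indexed by $\mc R$; by Theorem~\ref{theo convergence} (equivalently, by the boundary-value characterization~\eqref{Laplace+boundary}) the equilibrium $x$ is $P$-harmonic on $\mc R$ and equal to the constant $\ov x_k$ throughout each sink $\mc S_k$, hence it is harmonic on $\mc R$ for $\bar P$ as well. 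Second, $\tilde{\mc G}$ being connected and undirected, $\bar P$ is irreducible, aperiodic and reversible with respect to $\tilde\pi$, with mixing time $\tilde\tau$. Writing $(X_t)_{t\ge0}$ for the $\bar P$-chain, $T:=\inf\{t\ge0:\;X_t\in\mc S\}$, and $g$ for the function on $\mc S$ equal to $\ov x_k$ throughout $\mc S_k$, I would first establish the representation $x_i=\E_i[g(X_T)]$ for every $i\in\mc V$, by optional stopping applied to the bounded martingale $(x_{X_{t\wedge T}})_{t}$ (harmonicity on $\mc R$ makes it a martingale; $T<\infty$ a.s.\ since $\bar P$ is irreducible on the finite set $\mc V$ and $\mc S\neq\emptyset$). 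Averaging this identity against $\tilde\pi$ and using stationarity yields $\tilde x=\tilde\pi'x=\E_{\tilde\pi}[g(X_T)]$.

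Next, fix a horizon $t_0\ge0$ and split at time $t_0$. Conditioning on the history up to $t_0$ and using the Markov property on $\{T>t_0\}$ — on which $X_{t_0}\in\mc R$, so that $\E_i[g(X_T)\mid\mathcal F_{t_0}]=x_{X_{t_0}}$ there — gives $x_i=(\bar P^{t_0}x)_i+\E_i\!\big[(g(X_T)-x_{X_{t_0}})\,\mathbf 1_{\{T\le t_0\}}\big]$. Since $g$ and $x=H\ov x$ both take values in the interval $[\min_k\ov x_k,\max_k\ov x_k]$, of length $\Delta$, this yields $|x_i-(\bar P^{t_0}x)_i|\le\Delta\,\P_i(T\le t_0)$. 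On the other hand, centering $x$ at the midpoint of that interval and invoking $\sum_j|(\bar P^{t_0})_{ij}-\tilde\pi_j|\le e^{-\lfloor t_0/\tilde\tau\rfloor}$ (cf.\ the proof of Proposition~\ref{prop convergence} and \cite[Eqn.~(4.34)]{LevinPeresWilmer}) gives $|(\bar P^{t_0}x)_i-\tilde x|\le\tfrac{\Delta}{2}\,e^{-\lfloor t_0/\tilde\tau\rfloor}$. Adding the two estimates, averaging against $\tilde\pi$, and bounding $\sum_i\tilde\pi_i\,\P_i(T\le t_0)=\P_{\tilde\pi}(T\le t_0)\le(t_0+1)\,\tilde\pi_{\mc S}$ by a union bound over $s=0,\dots,t_0$ together with $\P_{\tilde\pi}(X_s\in\mc S)=\tilde\pi_{\mc S}$ (stationarity), one obtains
\[ \sum_{i\in\mc V}\tilde\pi_i\,|x_i-\tilde x| \le \Delta\,(t_0+1)\,\tilde\pi_{\mc S}+\tfrac{\Delta}{2}\,e^{-\lfloor t_0/\tilde\tau\rfloor}. \]

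Choosing $t_0$ of order $\tilde\tau\log\frac{1}{\tilde\tau\tilde\pi_{\mc S}}$ balances the two terms and makes the right-hand side at most $\Delta\,\psi(\tilde\tau\tilde\pi_{\mc S})$, with $\psi(y)=y\log(e^2/y)$ — the constant $e^2$ being precisely what the near-optimal choice of $t_0$ produces. Markov's inequality then turns this weighted $\ell^1$ bound into the desired cardinality bound: from $\mathbf 1\{|x_i-\tilde x|\ge\eps\}\le\frac{\tilde\pi_i}{\eps\,\tilde\pi_*}\,|x_i-\tilde x|$ one gets
\[ \frac1n\big|\{i\in\mc V:\;|x_i-\tilde x|\ge\eps\}\big| \le \frac1{\eps\,n\,\tilde\pi_*}\sum_{i\in\mc V}\tilde\pi_i\,|x_i-\tilde x| \le \frac{\Delta}{\eps\,n\,\tilde\pi_*}\,\psi(\tilde\tau\,\tilde\pi_{\mc S}), \]
which is the assertion, with $\tilde x=\tilde\pi'x$.

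The conceptual crux — and the step I would pin down first — is the passage to $\tilde{\mc G}$ and $\bar P$: it is exactly the modification that recasts the averaging equilibrium as a harmonic function of a reversible, fast-mixing chain, to which the mixing-time machinery applies, while leaving the sink values $\ov x_k$ and the harmonic equations on $\mc R$ untouched. The remaining friction is bookkeeping: justifying $x_i=\E_i[g(X_T)]$ by stopping at $T$ (since, edges inside $\mc S$ having been removed, $\mc S$ is not absorbing for $\tilde P$, yet $(x_{X_{t\wedge T}})_t$ is still a bounded martingale), and calibrating $t_0$ carefully enough — tracking the floor in $e^{-\lfloor t_0/\tilde\tau\rfloor}$, the ``$+1$'' in the union bound, and the $O(\Delta\,\tilde\pi_{\mc S})$ contribution of the nodes of $\mc S$ — to land on $\psi$ exactly rather than merely up to a universal constant. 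Incidentally, working instead with the fundamental matrix $Z=\sum_{t\ge0}(\bar P^t-\1\tilde\pi')$ and the reversibility of $\bar P$, one may write $x-\tilde x\,\1=Z\beta$ with $\beta:=(I-\bar P)x$ supported on $\mc S$ and $|\beta_i|\le\Delta$, which even removes the logarithmic factor; but the bound above already suffices.
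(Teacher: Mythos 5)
The paper does not prove this theorem---it is imported verbatim from \cite{Acemoglu.etal:2013}---so there is no in-paper argument to compare against. Judged on its own, your proof is essentially correct and reproduces the strategy of the cited reference: identify the equilibrium as the harmonic extension $x_i=\E_i[g(X_T)]$ for the lazy reversible chain $\bar P=\frac12(I+\tilde P)$ (legitimate, since $\tilde W$ agrees with $W$ on the $\mc R\times\mc V$ block, so $x$ remains $\bar P$-harmonic on $\mc R$ and equals $\ov x_k$ on $\mc S_k$), split at a horizon $t_0$ into a ``hit $\mc S$ early'' term controlled by $\P_{\tilde\pi}(T\le t_0)\le (t_0+1)\tilde\pi_{\mc S}$ and a ``not yet hit'' term controlled by the $\ell^1$ mixing bound, average against $\tilde\pi$, and convert to a counting bound via Markov's inequality with the $\tilde\pi_i/\tilde\pi_*$ trick. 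Every one of these steps is sound: the optional-stopping argument is valid ($T<\infty$ a.s.\ by irreducibility on the finite state space, the martingale is bounded, and $T=0$ on $\mc S$ handles the boundary rows), the identity $x_i-(\bar P^{t_0}x)_i=\E_i[(g(X_T)-x_{X_{t_0}})\1_{\{T\le t_0\}}]$ is correct, and the centering at the midpoint to get the factor $\Delta/2$ is the right move.

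The one place you have not actually closed the argument is the final calibration, and you are right to flag it: with $t_0=\lceil\tilde\tau\log(1/y)\rceil$, $y=\tilde\tau\tilde\pi_{\mc S}$, your two terms give roughly $y\log(1/y)+2\tilde\pi_{\mc S}+\frac{e}{2}y$ (the $\frac{e}{2}y$ coming from the floor in $e^{-\lfloor t_0/\tilde\tau\rfloor}$ and the $2\tilde\pi_{\mc S}$ from the ``$+1$'' and the ceiling), which overshoots $\psi(y)=y\log(1/y)+2y$ unless one sharpens something---e.g., replaces the union bound on $\P_{\tilde\pi}(T\le t_0)$ by a tighter estimate, or absorbs the slack into a slightly different choice of $t_0$. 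So as written you obtain the stated bound up to a universal constant in front of $\psi$, not the exact constant; this is a bookkeeping deficit, not a conceptual gap, and your closing remark about the fundamental-matrix identity $x-\tilde x\1=Z\beta$ with $\beta=(I-\bar P)x$ supported on $\mc S$ is indeed a clean alternative that sidesteps the calibration entirely.
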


Theorem \ref{theo:homog} implies that influence is homogenous in networks such that the product $\tilde\tau\cdot\tilde\pi_{\mc S}$ of the mixing time and the aggregate centrality of the set of stubborn nodes is vanishing (and $n\tilde\pi_*$ is bounded away from $0$). 
Networks such that $\tilde\tau\cdot\tilde\pi_{\mc S}\to0$ have been referred to as \emph{highly fluid} \cite{Acemoglu.etal:2013}. Examples of highly fluid networks include $d$-dimensional tori with $d\ge3$ when $|\mc S|\ll n^{1-2/d}$, and expansive networks such as the Erdos-Renyi graph provided that $|\mc S|\ll n/\log n$.

\begin{figure}
\begin{center}
\includegraphics[height=3.5cm]{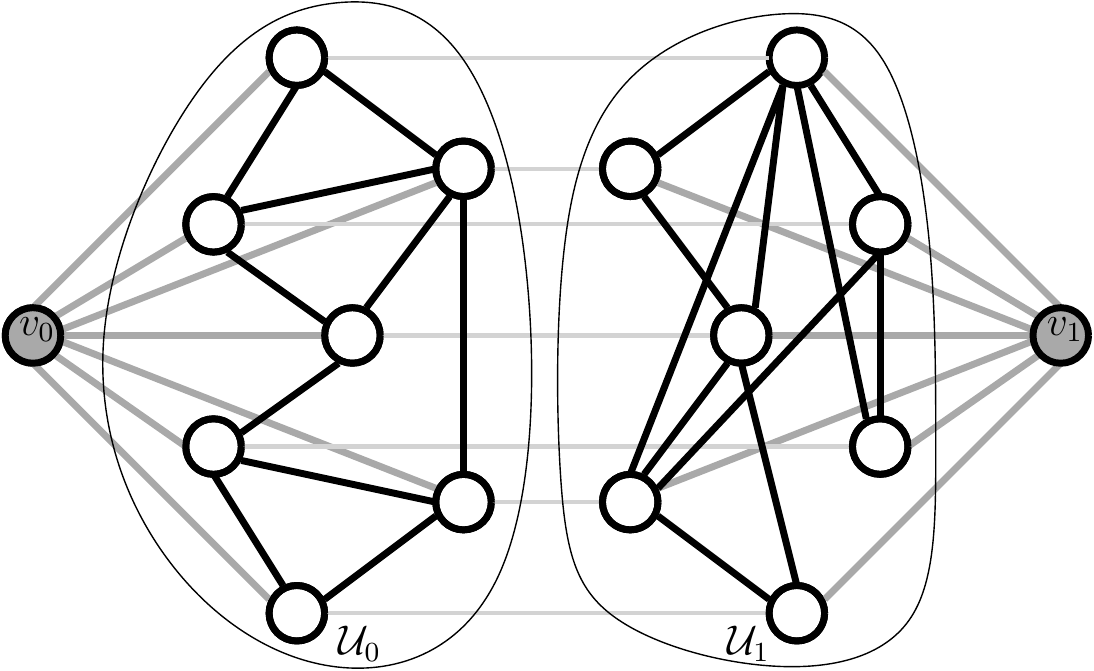}
\end{center}
\caption{\tcb{\label{fig:G} A graph $\mc G$ obtained by matching two independent and identically distributed Erdos-Renyi graphs (black links) with node sets $\mc U_0$ and $\mc U_1$ by weight-$\beta$ links (light grey horizontal links) and connecting, for $h=0,1$, each node in $\mc U_h$ with a stubborn node $v_h$ by a weight-$\gamma$ link (dark grey).}}
\end{figure}

We conclude this section with an application of Proposition \ref{proposition:polarization} and Theorem \ref{theo:homog}, highlighting a threshold phenomenon% for the global behavior of the equilibrium state vector $x$
, with a transition from polarization to homogeneous influence. Let $\mc G_0=(\mc U_0,\mc E_0)$ and $\mc G_1=(\mc U_1,\mc E_1)$ be two independent and identically distributed Erdos-Renyi random graphs  with parameters $|\mc U_0|=|\mc U_1|=m$ and $p=\omega m^{-1}\log m$, where $\omega>1$ is a constant independent of $m$. 
Hence, distinct pairs of nodes $\{i,j\}\subseteq\mc U_{h}$, $h=0,1$, are connected by weight-$1$ undirected links independently with probability $p$. The scaling $pm/\log m=\omega>1$ guarantees that $\mc G_0$ and $\mc G_1$ are connected with high probability as $m$ grows large. \cite[Ch.~6]{Durrettbook}
Then, let $\mc G=(\mc V,\mc E,W)$, where $\mc V=\{v_0\}\cup\mc U_0\cup\mc U_1\cup\{\tcr{
v_1}\}$  be the graph obtained by interconnecting $\mc G_0$ and $\mc G_1$ by an arbitrary matching of $\mc U_0$ and $\mc U_1$ of weight-$\beta$ links (i.e., every node in $\mc U_h$ is connected to exactly one node in $\mc U_{1-h}$ by an undirected weight-$\beta$ link) and adding a directed weight-$\gamma$ link from each node in $\mc U_0$ to $v_0$ and from each node in $\mc U_1$ to $v_1$. (See Figure \ref{fig:G}.)
Proposition \ref{proposition:polarization} and Theorem \ref{theo:homog} imply that 
\begin{description}
\item[(i)] if $\gamma\gg\beta$, then the network is polarized; 
\item[(ii)] if $\gamma\ll\beta\ll1$, then influence is homogeneous.
\end{description}
Indeed, point (i) above follows directly from Proposition \ref{proposition:polarization}. 
In order to verify point (ii), for $h=0,1$, consider the network $\hat{\mc G}_h$ with regular nodes $\mc U_h$ and stubborn nodes $\mc S_h:=\{v_h\}\cup\mc U_{1-h}$ obtained by removing from $\mc G$ node $v_{1-h}$ along with all the internal links of $\mc G_{1-h}$. 
\tcr{Let $\tilde{\mc G}_h$ be the undirected graph obtained from $\hat{\mc G}_h$ by making all its links incident to $v_h$ and to any $v\in\mc U_{1-h}$ bidirectional with weight $\gamma$  and $\beta$, respectively. (See Figure \ref{fig:G0}.)}
%, is highly fluid. 
Let $l$ be the total number of undirected links in the Erdos-Renyi graph $\mc G_h$, that is of order $m^2p=\omega m\log m$ with high probability. Note that, in $\tilde{\mc G}_h$, \tcb{the degree of $v_h$ is $m\gamma$ ($m$ weight-$\gamma$ incident links), the degree of any $i\in\mc U_{1-h}$ is $\beta$ (one weight-$\beta$ link)}, and the total degree of nodes in $\mc U_h$ is $m(\gamma+\beta)+2l$, so that the aggregate centrality of $\mc S_h$ in \tcr{$\tilde{\mc G}_h$} is given by 
\be\label{pih}\tilde\pi_{\mc S_h}=(\gamma+\beta)m/(2(\gamma+\beta)m+2l)\,.\ee %(total degree of nodes in $\mc S_h$ over total degree in $\tilde{\mc G}_h$). 

\tcb{Now, we recall a result in \cite[Ch.~6]{Durrettbook} stating that the connected Erdos-Renyi graph has conductance bounded away from $0$ with high probability as the network size grows large. This applies directly to the conductance $\Phi_h$ of $\mc G_h$, for $h=0,1$, while it can be shown to carry over to the conductance $\tilde\Phi_h$ of $\tilde{\mc G}_h$ upon verifying that $\gamma\ll\beta\ll1$ implies that the centralities and the transition probability between every $i,j\in\mc U_h$ are of the same order in $\mc G_h$ and $\tilde{\mc G}_h$, and that cuts in $\tilde{\mc G}_h$ separating subsets $\mc U\subseteq\{v_h\}\cup\mc U_{1-h}$ have bottleneck ratios bounded away from $0$. Hence, $\tilde\Phi_h\sim\Phi_h$ with high probability as $n$ grows large, and using the fact that the minimum degree in $\mc G_h$ is of order $\log m$, and the bound \eqref{conductance-bound}, one can show that the mixing time of $\tilde{\mc G}_h$ is of order $\log m$. 
By combining this with \eqref{pih}, it follows that, with high probability as $m$ grows large, the network $\hat{\mc G}_h$, $h=0,1$, is highly fluid if $\gamma\ll\beta\ll1$. }

\begin{figure}
\begin{center}
\includegraphics[height=3.5cm]{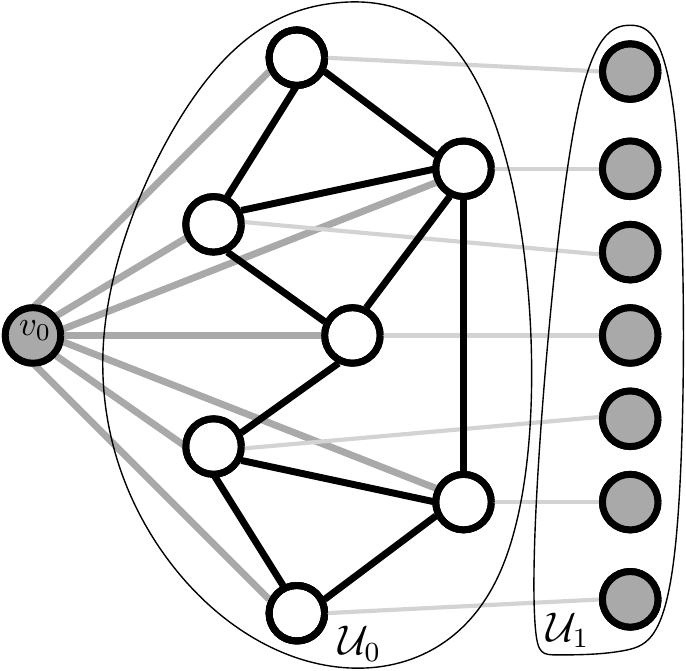}%\includegraphics[height=3.5cm]{G1.pdf}
\end{center}
\caption{\tcb{\label{fig:G0} The graph $\tilde{\mc G}_0$ obtained from $\mc G$ by removing node $v_1$ along with its incident links and all links connecting pairs of nodes in $\mc U_1$ and making links incoming in $v_0$ bidirectional.}}
\end{figure}

Finally, let $x$ and $\hat x^{(h)}$, for $h=0,1$, be the solutions of the Laplace equations respectively on $\mc G$ with boundary conditions $x_{v_0}=0$, $x_{v_1}=1$, and on $\hat{\mc G}_h$ with boundary conditions $\hat x^{(h)}_{v_h}=h$ and $\hat x^{(h)}_i=x_i$ for all $i\in\mc U_{1-h}$. Observe that $\hat x^{(h)}$ coincides with the restriction of $x$ on $\mc V\setminus\{v_{1-h}\}$. Then, Theorem \ref{theo:homog} implies that influence in $\hat{\mc G}_h$ is homogeneous, so that all but a vanishing fraction of nodes $i$ in  $\mc U_0$ and $\mc U_1$ have equilibrium state $\hat x_i=x_i$ close to $y_0$ and $y_1$, respectively. On the other hand, Proposition \ref{proposition:polarization} implies that, if $\gamma\ll\beta$, then $y_1-y_0\to0$, so that influence is homogeneous on the whole network $\mc G$.

\section{Conclusion}Simple and deep at the same time: two features that Jan Willems considered central in science can be well appreciated in the theory of distributed averaging. This paper has presented  some fundamental results for distributed averaging systems in a unified framework, giving a novel coherent perspective to classical material together with new generalizations.  The role of the electrical network interpretation in providing insight into the equilibrium analysis has been highlighted and 
some advanced material on the transition between homogeneous influence and polarization
has been presented.% through original examples. 
 % in social dynamics. 
 
Challenging problems for future research include: more complex heterogeneous networks,  robustness to perturbations \cite{Como.Fagnani:2015}, interconnections of higher order or nonlinear systems (e.g., coupled oscillators).

\section*{References}

\bibliographystyle{elsarticle-num} 
\bibliography{influence}

\end{document}